\newcommand{\junk}[1]{}
\newtheorem{theorem}{Theorem}
\newtheorem{corollary}{Corollary}
\newtheorem{lemma}{Lemma}
\newtheorem{remark}{Remark}
\newtheorem{definition}{Definition}
\newtheorem{proposition}{Proposition}
\def\OR{\vee}
\def\AND{\wedge}
\def\bigdoublewedge{\bigwedge}
\def\bigdoublevee{\bigvee}
\begin{document}

\pagestyle{empty}

\title{Proof Complexity and the Kneser-Lov\'asz Theorem (I)} 

\author{
Gabriel Istrate, 
Adrian Cr\~aciun\footnote {Dept. of Computer Science, West University of Timi\c{s}oara and e-Austria Research Institute, Bd. V. P\^{a}rvan 4, cam. 045
B, Timi\c{s}oara, RO-300223, Romania. Corresponding author's email: {\tt gabrielistrate@acm.org }}}

\maketitle

\begin{abstract} 
We investigate the proof complexity of a class of propositional formulas 
expressing a combinatorial principle known as {\em the Kneser-Lov\'{a}sz  Theorem}. 
This is a family of propositional tautologies, indexed by an nonnegative integer
parameter $k$ that generalizes the Pigeonhole Principle
(obtained for $k=1$). 

We show, for all fixed $k$, 
$2^{\Omega(n)}$ lower bounds on resolution complexity and exponential lower bounds for bounded depth Frege proofs. These results 
hold even for the more restricted class of formulas encoding Schrijver's strenghtening of the Kneser-Lov\'{a}sz Theorem. On the other hand for the cases $k=2,3$ (for which combinatorial proofs of the Kneser-Lov\'{a}sz Theorem are known) we give polynomial size Frege ($k=2$), respectively extended Frege ($k=3$) proofs. The paper concludes with a brief announcement of the results (presented in subsequent work) on the complexity of the general case of the Kneser-Lov\'{a}sz theorem.

\end{abstract}

\section{Introduction}

\label{intro}  One of the most interesting approaches in discrete mathematics is the use of 
 topological methods to prove results having a purely combinatorial nature. The approach started 
 with Lov\'asz's proof \cite{lovasz-kneser} of a combinatorial 
 statement raised as an open problem by Kneser in 1955 (see \cite{longueville-ems} for a historical account). A significant amount of 
 work has resulted from this conjecture (to get a feel for the advances  consult \cite{matousek-book,kozlov-book}).

Methods from topological combinatorics raise interesting challenges from a complexity-theoretic  point of view: they are non-constructive, often based on principles that appear to lack polynomial time algorithms (e.g. Sperner's Lemma and the Borsuk-Ulam Theorem \cite{papadimitriou1994complexity-parity}). The concepts involved (simplicial complexes, chains, chain maps) seem to  require intrinsically exponential size representations. 

In this paper we raise the possibility of using statements from topological combinatorics as a source of interesting candidates for proof complexity. In particular we view the Kneser-Lov\'{a}sz theorem as a statement on the unsatisfiability of a certain class of propositional formulas, and investigate the complexity of proving their unsatisfiability.

We were initially motivated by the problem of separating the Frege
and extended Frege proof systems. Various candidate formulas have
been proposed (see \cite{bonet-buss-pitassi} for a discussion). It was natural to wonder whether the non-elementary nature of mathematical proofs of Kneser's theorem translates into hardness and separation results in propositional complexity. We no longer believe that this problem provides such examples. Yet gauging its precise 
complexity  is still, we feel, interesting. 

A slightly different perspective on this problem is the following: Matou\v{s}ek obtained \cite{matousek-combinatorial-kneser} a "purely combinatorial" proof of the Kneser-Lov\'asz theorem, a proof that does not explicitly mention any topological concept. While combinatorial, Matou\v{s}ek's proof is nonconstructive: the approach in 
\cite{matousek-combinatorial-kneser} "hides" in purely combinatorial terms the application of the so-called {\em Octahedral Tucker Lemma}, a discrete variant of the Borsuk-Ulam theorem. Searching for the object guaranteed to exist by this principle, though "constructive" in theory \cite{freund1981constructive} is likely to be intractable, as the associated search problem for the 2-d Tucker lemma\footnote{As kindly pointed to us by professor P\'alv\H{o}lgyi this is also likely but not explicitly proved in \cite{palvolgyi20092d} for the octahedral Tucker lemma.} is complete for the class PPAD \cite{palvolgyi20092d}. 

Thus another perspective on the main question we are interested in is {\bf under what circumstances do cases of the Kneser-Lov\'asz theorem  have combinatorial proofs of polynomial size}. This depends, of course, on the proof system considered, making the question fit the "bounded reverse mathematics" program of Cook and Nguyen \cite{cook2010logical}. A natural boundary seems to be the class of Frege proofs: for $k=1$ the Kneser-Lov\'asz theorem is equivalent to the pigeonhole principle (PHP) that has polynomial size $TC^{0}$-Frege proofs, but exponential lower bounds in  resolution \cite{buss-frege-php} and bounded depth Frege. On the other hand {\bf obtaining a similar upper bound} for the general case would be quite significant, as {\bf it would seem to require completely bypassing the techniques from Algebraic Topology} starting instead from radically different principles.

Our contributions (and the outline of the paper) can be summarized as follows: In Section~\ref{sec:lb} we give a reduction between $Kneser_{k,n}$ and $Kneser_{k+1,n}$ for arbitrary $k\geq 1$. As an application we infer that existing lower bounds for PHP apply to formulas $Kneser_{k,n}$ for any fixed value of $k$. In Section~\ref{sec:23} we investigate cases $k=2,3$ (when the Kneser-Lov\'asz theorem has combinatorial proofs). We give Frege proofs (for $k=2$) and extended Frege proofs (for $k=3$), both having polynomial size. 

As usual in the case of bounded reverse mathematics, our positive results could have been made uniform by stating them (more carefully) as expressibility results in certain logics: for instance our result for the case $k=2$ of the Kneser-Lov\'asz theorem could be strengthened to an expressibility result in logical theory $VNC^1$ \cite{cook2010logical}. We will not pursue this approach in the paper, deferring it to the journal version.

\section{Preliminaries} 

Throughout this paper $k$ will be a fixed constant greater or equal to 1. Given a set of integers $A$, we will denote by ${{A}\choose {k}}$ the set of cardinality $k$ subsets of set $A$. We will write $|A|$ instead of $A$ in the previous definition in case $A=\{1,2,\ldots, n\}$ for some $n\geq 1$. $A\subseteq [n]$ will be called {\em stable} if for no $1\leq i\leq n$ both $i$ and $i+1 \mbox{ (mod n) }$ are in $A$. Also denote by $A_{\leq k}$ (called "firsts of A") the set of smallest (at most) $k$ elements of $A$.

The Kneser-Lov\'{a}sz theorem is formally stated as follows:

\begin{proposition} 
Given $n\geq 2k\geq 1$ and a function $c:{{n}\choose {k}}\rightarrow [n-2k+1]$ there exist two disjoint sets $A,B$ and a color 
$1\leq l\leq n-2k+1$ with $c(A)=c(B)=l$. 
\label{kl}
\end{proposition}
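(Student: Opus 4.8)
The plan is to prove the equivalent graph-colouring formulation: the Kneser graph $KG_{n,k}$ — vertices the $k$-subsets of $[n]$, edges joining disjoint pairs — admits no proper colouring with $n-2k+1$ colours, so any $c:{{[n]}\choose{k}}\rightarrow[n-2k+1]$ has a monochromatic edge. I would follow Bárány's topological argument. Set $d=n-2k$, treating the degenerate case $d=0$ (i.e. $n=2k$) separately, where the claim is immediate since with a single colour the complementary pair $A,[n]\setminus A$ works. First I would invoke Gale's lemma to fix a set $X$ of $n$ points on the sphere $S^{d}$ that is in general position (no $d+1$ of them lying on a common great subsphere) and ``evenly spread'', meaning that every open hemisphere of $S^{d}$ contains at least $k$ points of $X$; then identify ${{[n]}\choose{k}}$ with the $k$-subsets of $X$ via an arbitrary bijection.

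Next I would argue by contradiction: suppose $c$ has no monochromatic edge. For each colour $i\in[n-2k+1]$ let $A_{i}$ be the set of $x\in S^{d}$ whose open hemisphere $H(x)=\{y\in S^d:\langle x,y\rangle>0\}$ contains some $k$-subset of $X$ of colour $i$; each $A_{i}$ is open. Let $A_{n-2k+2}=S^{d}\setminus\bigcup_{i}A_{i}$, which is closed. These $n-2k+2=d+2$ sets cover $S^{d}$, so by the Lusternik--Schnirelmann--Borsuk theorem (the standard equivalent of Borsuk--Ulam, valid for covers whose members are each open or closed) some member contains an antipodal pair $x,-x$. If $x,-x\in A_{i}$ with $i\le n-2k+1$, then $H(x)$ and $H(-x)$ each contain a $k$-subset of $X$ of colour $i$, and these two subsets are disjoint because $H(x)\cap H(-x)=\emptyset$ — contradicting properness. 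If instead $x,-x\in A_{n-2k+2}$, then neither hemisphere contains a full $k$-subset of $X$, so $|X\cap H(x)|\le k-1$ and $|X\cap H(-x)|\le k-1$; the remaining points of $X$ lie on the great subsphere $x^{\perp}\cap S^{d}\cong S^{d-1}$, of which there are at most $d$ by general position, giving $|X|\le 2(k-1)+d=n-2$, contradicting $|X|=n$. Either branch yields a contradiction, so a monochromatic edge, i.e. disjoint $A,B$ with $c(A)=c(B)$, must exist.

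The real content is packaged inside two black boxes: Gale's lemma on the existence of the evenly-distributed configuration $X$ (proved via the moment curve together with a sign-alternation argument) and the Borsuk--Ulam / Lusternik--Schnirelmann--Borsuk theorem. Granting these, the case analysis above is routine, and the only points I would write out carefully are the ``open or closed members'' version of LSB and the general-position bookkeeping pinning down $|X|\le n-2$ in the second case. The main obstacle is thus not the combinatorics but the reliance on this topological machinery — precisely the reliance that the rest of the paper seeks to circumvent in the proof-complexity setting.
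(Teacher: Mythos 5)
Your proof is correct, but it takes a route the paper deliberately avoids. The paper never proves Proposition~\ref{kl}: it cites Lov\'asz and treats the theorem as given, devoting its effort to propositional proof complexity --- combinatorial Frege/extended Frege upper bounds for $k=2,3$, a reduction to the octahedral Tucker lemma for general $k$ in the companion paper, and resolution and bounded-depth Frege lower bounds. Your argument is the B\'ar\'any/Greene topological proof (Gale's lemma plus the Lusternik--Schnirelmann--Borsuk theorem), a standard alternative to Lov\'asz's neighborhood-complex argument; it is exactly the Borsuk--Ulam machinery the paper is trying to circumvent, since it does not translate into short propositional proofs. One redundancy worth flagging: once Gale's lemma is invoked in its ``evenly spread'' form (every open hemisphere contains at least $k$ points of $X$), the closed set $A_{n-2k+2}$ is already empty --- each $H(x)$ contains a $k$-subset of $X$, hence receives some color $i\le n-2k+1$, hence $x\in A_i$ --- so your case~2 never arises. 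You are effectively superimposing B\'ar\'any's proof (evenly-spread Gale configuration plus the open-cover form of Borsuk--Ulam, where the closed set is unneeded) onto Greene's (general position only, where the closed set and the open/closed LSB theorem carry the weight). Either alone suffices; if you keep the closed set and the $|X|\le 2(k-1)+d=n-2$ bookkeeping, you can drop the evenly-spread requirement and ask only that $X$ be in general position.
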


An even stronger form was proved by Schrijver \cite{schrijver1978vertex}: Proposition~\ref{kl} is true if we limit the domain of $c$ to all stable subsets\footnote{we will denote this collection of sets by ${{n}\choose {k}}_{stab}$} of $[n]$ of cardinality $k$:

\begin{proposition} 
Given $n\geq 2k\geq 1$ and a function $c:{{n}\choose {k}}_{stab}\rightarrow [n-2k+1]$ there exist two disjoint sets $A,B$ and a color 
$1\leq l\leq n-2k+1$ with $c(A)=c(B)=l$. 
\label{shcrij}
\end{proposition}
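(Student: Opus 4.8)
The statement is a lower bound on a chromatic number in disguise, and the plan is to prove it in that form. A function $c:{{n}\choose{k}}_{stab}\to[n-2k+1]$ with no two disjoint sets of the same colour is precisely a proper colouring, with $n-2k+1$ colours, of the \emph{Schrijver graph} $S(n,k)$ --- the graph whose vertices are the stable $k$-subsets of the $n$-cycle and whose edges join the disjoint pairs. So it suffices to prove $\chi\bigl(S(n,k)\bigr)\ge n-2k+2$, i.e.\ to derive a contradiction from a hypothetical proper $(n-2k+1)$-colouring $c$. Since $S(n,k)$ is the subgraph of the Kneser graph induced on the stable sets, the argument will simultaneously reprove Proposition~\ref{kl}; I would carry it out directly for the stable case, which is the one needing a genuinely new idea --- the statement does not follow from Proposition~\ref{kl}.

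The engine is a $\mathbb{Z}_2$-combinatorial form of the Borsuk--Ulam theorem, the \emph{Octahedral Tucker Lemma} \cite{matousek-book}. Identify $\{+,-,0\}^{n}\setminus\{\mathbf{0}\}$ with the nonzero signed subsets $x=(X^{+},X^{-})$ of $[n]$ (with $X^{+}\cap X^{-}=\emptyset$ and $X^{+}\cup X^{-}\neq\emptyset$); write $x\preceq y$ when $X^{+}\subseteq Y^{+}$ and $X^{-}\subseteq Y^{-}$, and $-x=(X^{-},X^{+})$ for the antipode. The lemma states that there is no \emph{equivariant} map $\lambda$ from this set to $\{\pm1,\dots,\pm(n-1)\}$ (i.e.\ with $\lambda(-x)=-\lambda(x)$) that has no \emph{complementary comparable pair}, i.e.\ no pair $x\preceq y$ with $\lambda(x)=-\lambda(y)$. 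The whole proof consists in manufacturing such a $\lambda$ out of the colouring $c$, thereby contradicting the lemma.

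To build $\lambda$ I would partition the signed sets using an \emph{alternation number} $\mathrm{alt}(x)$, a statistic measuring how often $x$ changes sign around the cycle, with the two crucial properties $\mathrm{alt}(-x)=\mathrm{alt}(x)$ and $x\preceq y\Rightarrow\mathrm{alt}(x)\le\mathrm{alt}(y)$. When $\mathrm{alt}(x)$ is small one puts $\lambda(x)=\pm\,\mathrm{alt}(x)$ with a canonically chosen sign, using only the small labels. When $\mathrm{alt}(x)$ is large one shows that $X^{+}$ or $X^{-}$ contains a \emph{stable} $k$-subset, canonically picks one such $S$, and sets $\lambda(x)=\pm\bigl(\mathrm{const}+c(S)\bigr)$ with the sign recording whether $S\subseteq X^{+}$ or $S\subseteq X^{-}$. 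Equivariance is then immediate. A complementary comparable pair of the ``large'' type yields stable $k$-sets $S\subseteq X^{+}\subseteq Y^{+}$ and $T\subseteq Y^{-}$ with $c(S)=c(T)$ and $S\cap T\subseteq Y^{+}\cap Y^{-}=\emptyset$ --- contradicting that $c$ is proper --- while pairs of the ``small'' or of mixed type are ruled out by monotonicity of $\mathrm{alt}$ and by arranging the two label ranges to be disjoint. It is precisely the colour count $n-2k+1$ that lets the two ranges glue together into $\{\pm1,\dots,\pm(n-1)\}$, so that the Octahedral Tucker Lemma applies.

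The part I expect to be the real obstacle is the interaction with stability and the exact arithmetic that goes with it. First, one needs the combinatorial lemma that a sufficiently large alternation number really does force a stable $k$-subset inside $X^{+}$ or $X^{-}$: consecutive same-sign terms of an alternating pattern are separated on the cycle, but the cyclic ``wrap-around'' pair must also be kept apart, which makes the precise threshold (and the choice of $S$) parity-sensitive and somewhat fussy. Second, the constants in the two regimes must be pinned down so that all labels land in $\{\pm1,\dots,\pm(n-1)\}$ and not in $\{\pm1,\dots,\pm n\}$ --- the gap between these is exactly the gap between having a proof and not. If that bookkeeping proves unpleasant, the classical alternative is Schrijver's own argument \cite{schrijver1978vertex}: apply the Borsuk--Ulam theorem directly to an explicit point configuration on the sphere $S^{\,n-2k+1}$ chosen, by a Gale-type evenness condition, so that every open hemisphere contains a stable $k$-subset. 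For the present paper's aims the combinatorial Tucker-lemma route is the relevant one, since it is the proof one would hope to formalise inside a weak theory of bounded arithmetic.
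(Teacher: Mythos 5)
The paper does not prove Proposition~\ref{shcrij}; it is quoted as a theorem of Schrijver \cite{schrijver1978vertex}, and the paper's contribution concerns the proof complexity of its \emph{propositional encodings} $Sch_{k,n}$, not the theorem itself. So there is no in-paper proof against which to compare your argument.

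Evaluated on its own terms, your sketch is the standard combinatorial route: Matou\v{s}ek's octahedral-Tucker argument for the Kneser--Lov\'asz theorem, adapted to the cyclic/stable setting (as done by Ziegler and others). The two-regime labelling by alternation number, the equivariance check, and the derivation of two disjoint like-coloured stable $k$-sets from a complementary comparable pair of the ``large'' type are all correct in outline, and you have accurately located the two genuine obstacles: (i) a large \emph{linear} alternation number only produces a $k$-subset with no two consecutive elements in the linear sense, so enforcing cyclic stability (excluding the wrap-around pair $\{1,n\}$) requires either a cyclic alternation statistic or a distinguished cut point on the cycle; and (ii) the threshold between regimes must be tuned so the two label ranges tile exactly $\{\pm 1,\dots,\pm(n-1)\}$, and the naive split overshoots by one. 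Both issues are surmountable, and the fallback to Schrijver's original Borsuk--Ulam/Gale-evenness argument \cite{schrijver1978vertex} is likewise sound. For the purposes of this paper, however, the operative question is not whether such a proof exists but whether it can be rendered inside a weak propositional proof system --- which is why the paper treats Proposition~\ref{shcrij} as given and instead studies lower bounds for the formulas $Sch_{k,n}$ via the substitution of Theorem~\ref{subst-tucker}; your closing remark that the Tucker-lemma route is the one relevant to formalization is exactly the right orientation.
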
 

The Kneser-Lov\'{a}sz Theorem can be seen as a statement about the
chromatic number of a particular graph: define the graph $KG_{n,k}$
to consist of the subsets of cardinality $k$ of $[n]$, connected by an
edge when the corresponding sets are disjoint (Figure~\ref{kg-graph}). Then the
Kneser-Lov\'{a}sz Theorem is equivalent to $\chi(KG_{n,k})\geq
n-2k+1$ (in fact $\chi(KG_{n,k})= n-2k+2$, since the upper bound is
easy \cite{matousek-book}).

\begin{figure}[ht]
\begin{center}
\includegraphics[height=3cm]{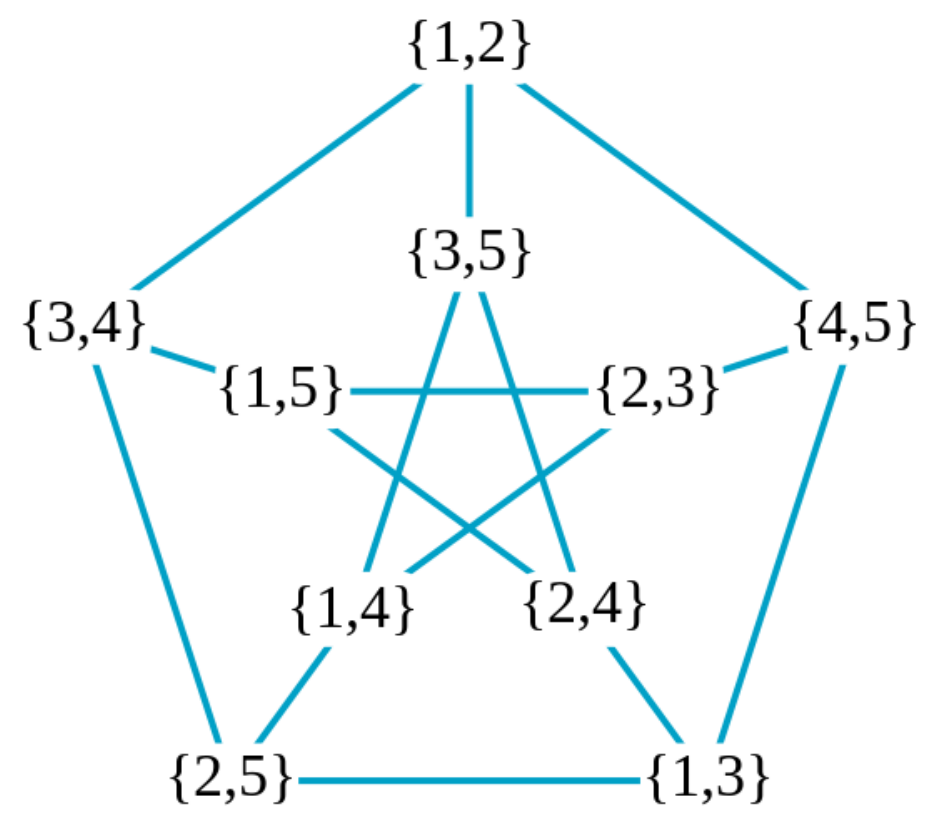}
\end{center}
\caption{Kneser graph $KG_{5,2}$ a.k.a. the Petersen graph. The Kneser-Lov\'asz Theorem states that the chromatic number of this graph is 3. Schrijver's Theorem claims that a similar result holds for the interior star only}
\label{kg-graph}
\end{figure}

We assume familiarity with the basics of proof complexity, as presented for instance in \cite{krajicek-book}, in particular with resolution complexity (the size measure will be denoted by $res$), Frege, extended Frege (EF) proofs and the concepts and results in \cite{buss-frege-php}. We will state our positive results using the sequent calculus system LK \cite{krajicek-book}, a system  $p$-equivalent to Frege proofs.


\begin{definition}
Let {\bf $PHP_{n}^{m}$} be the formula 
$
 \bigdoublewedge\limits_{i=1}^{m} (\bigdoublevee\limits_{l=1}^{n} X_{i,l})\vdash
 \bigdoublevee\limits_{i\neq j}\big[\bigdoublevee\limits_{l=1}^{n} (X_{i,l}\wedge
 X_{j,l})\big].$
\end{definition}

$PHP_{n}^{n+1}$ has polynomial time Frege proofs \cite{buss-frege-php}. An important ingredient of the proof is the representation of natural numbers as sequences of bits, with every bit being expressed as the truth value of a certain formula. We will use a similar strategy. In particular quantities such as ${{n}\choose {2}}$ will refer to the logical encoding of the binary expansion of integer $\frac{n\cdot (n-1)}{2}$. We will further identify  statements such as "$A=B$" or "$A\leq B$" with the logical formulas expressing them. The approach of Buss uses {\em counting}, defining a set of families of formulas $Count_{n}$, such that  
$Count_{n}(Y_{1},\ldots, Y_{n})$ yields the binary encoding of the number of variables $Y_{1},\ldots, Y_{n}$ that are TRUE. We will often drop the index $n$ from notation if its value is self-evident. We will further need several 
simple intentional properties of function $Count$ with respect to combinatorics. Formal arguments are deferred to the journal version.   

\begin{lemma} Let $n\leq m$. and let $X_{1},\ldots X_{n}, Y_{1},\ldots Y_{m}$ be logical variables
. In $LK$ one can give polynomial-size proofs of the following facts:
\begin{enumerate}
\item $X_{1}\AND X_{2}\AND \ldots X_{n}\vdash Count_{n}[X_{1},X_{2},\dots, X_{n}]=n$. 
\item Let $X_{1},X_{2},\ldots X_{n}$ be logical variables. Then 
\[
\vdash Count_{{{n}\choose {2}}}[X_{1}\AND X_{2}, \ldots, X_{i}\AND X_{j},\ldots ,X_{n-1}\AND X_{n}]={{Count_{n}[X_{1},X_{2},\dots, X_{n}]}\choose {2}}
\]
\item 
Let $X_{1},X_{2},\ldots X_{n}$ be logical variables. Then
\[
\vdash Count_{n^2}[X_{i}\AND \delta_{\{i\neq j\}}]=Count_{n}[X_{1},X_{2},\dots, X_{n}]\cdot (n-1).
\]
\item 
\[
X_{1}\leq Y_{1}, \ldots, X_{n}\leq Y_{n}\vdash Count_{n}[(X_{i})]\leq Count_{m}[(Y_{j})].
\]
\end{enumerate} 
\label{frege-count}
\end{lemma}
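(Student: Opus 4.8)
The plan is to prove each of the four items by exhibiting, uniformly in $n$, a polynomial-size $LK$ derivation that mirrors a straightforward inductive argument on the number of variables. The common engine is the recursive structure of $Count$: if $Count_{n}$ is built (as in Buss's construction) from $Count_{\lceil n/2\rceil}$ applied to the first half and $Count_{\lfloor n/2\rfloor}$ applied to the second half followed by a binary addition circuit, then $LK$ has polynomial-size proofs of the "splitting identity" $Count_{n}[Z_1,\dots,Z_n] = Count_{\lceil n/2\rceil}[Z_1,\dots] + Count_{\lfloor n/2\rfloor}[\dots,Z_n]$ together with the basic facts about the encoded arithmetic operations $+$, $\cdot$, $\binom{\cdot}{2}$, and $\le$ on binary-encoded integers (commutativity, associativity, distributivity, monotonicity), all of which are available at polynomial size already in $TC^0$-Frege and hence in $LK$. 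I would state this splitting identity and the arithmetic toolkit as a preliminary sublemma (or cite Buss) and then use it as a black box.

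With that in hand, item~(1) is the easiest: from the hypotheses $X_1,\dots,X_n$ one derives $Count_{n}[X_1,\dots,X_n]=Count_{n}[1,\dots,1]$ and then, by a $\log n$-deep recursion on the splitting identity, that this equals the constant $n$; the proof tree has depth $O(\log n)$ and each node is a fixed-size arithmetic manipulation, so total size is polynomial. Item~(4) is proved the same way but carrying inequalities: from $X_i \le Y_i$ one gets $Count$ of the $X$-block $\le$ $Count$ of the corresponding $Y$-block at the leaves, and monotonicity of binary addition propagates this up the recursion tree; the case $n<m$ is handled by padding the $X$-side with the constant $0$ so both sides have the same shape, using $Count_n[(X_i)] \le Count_m[(X_i,0,\dots,0)]$.

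Items~(2) and~(3) are where the real work is, because they relate $Count$ on a quadratic number of inputs to an arithmetic function of $Count$ on $n$ inputs. For item~(3), I would group the $n^2$ variables $X_i \wedge \delta_{\{i \neq j\}}$ by the index $i$: for each fixed $i$ the inner block of size $n$ has $Count$ equal to $(n-1)\cdot X_i$ (a single multiplication-by-constant fact, i.e. $X_i$ extended to its $\lceil\log n\rceil$-bit copy times $n-1$), and then summing the $n$ outer blocks via the splitting identity gives $\sum_i (n-1) X_i = (n-1)\cdot\sum_i X_i = (n-1)\cdot Count_n[(X_i)]$, using distributivity of the encoded product over the encoded sum — this last step is the one place one must be careful that the distributive law for multi-summand sums has a polynomial-size $LK$ proof, which it does by the same $\log n$-deep recursion. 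Item~(2) is the subtlest: one must prove $Count_{\binom n2}$ over the pair-conjunctions $X_i\wedge X_j$ equals $\binom{Count_n[(X_i)]}{2}$. I would first reduce to the case where the $X_i$ are constants $0/1$ — more precisely, prove the identity with $X_i$ replaced by fresh variables and then note that for each truth assignment it reduces to a numerical identity $\binom{m}{2}=\sum_{i<j} [i,j \text{ both selected}]$, but since $LK$ cannot case-split over $2^n$ assignments, one instead argues by induction on $n$: splitting off $X_n$, the pair-conjunctions split into those not involving $n$ (a $\binom{n-1}{2}$-block, handled by induction) plus the $n-1$ terms $X_i \wedge X_n$ (whose $Count$ is $X_n \cdot Count_{n-1}[(X_i)_{i<n}]$, by item~(3)-style reasoning), and on the arithmetic side one needs the identity $\binom{a+b}{2}=\binom{a}{2}+\binom{b}{2}+ab$ specialized to $b=X_n\in\{0,1\}$, i.e. $\binom{m+X_n}{2}=\binom m2 + X_n\cdot m$, which is a polynomial-size arithmetic fact. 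The induction here is on $n$ rather than $\log n$, so a naive proof has size $\Omega(n)\cdot\mathrm{poly}(n)$, still polynomial — but one must double-check that the inductive invariant is stated uniformly enough that the per-step proof is genuinely of fixed polynomial size and that the $Count$ on the $\binom{n-1}{2}$ sub-block is literally a subformula of the $Count$ on the $\binom n2$ block, so that no re-proving of the recursion is needed at each step.

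The main obstacle I expect is item~(2): getting the bookkeeping right so that the $\binom n2$ pair-conjunctions, under Buss's particular balanced recursive layout of $Count$, actually decompose cleanly into a sub-$Count$ over $\binom{n-1}{2}$ pairs plus a clean linear tail, rather than into an awkwardly interleaved collection of sub-blocks that would force a more elaborate (but still polynomial) rearrangement argument. In the body I will either choose the indexing of the pairs to match the recursion tree, or prove a short "$Count$ is invariant under permutation of its arguments" lemma (again polynomial-size, by sorting-network-style reasoning inside $LK$) and use it to reorganize freely. All the arithmetic identities invoked ($\binom{m+1}{2}=\binom m2 + m$, distributivity, monotonicity of addition) are standard $TC^0$-Frege facts and I will invoke them without reproving, consistently with the paper's stated policy of deferring formal counting arguments to the journal version.
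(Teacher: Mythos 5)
The paper does not actually prove this lemma: the sentence immediately preceding it states that ``Formal arguments are deferred to the journal version,'' so there is no in-paper proof against which to compare your sketch. That said, your sketch is the natural and essentially correct route. Leveraging the recursive (balanced-binary-tree) structure of Buss's $Count$, a splitting identity, and a black box of $TC^0$-Frege-provable arithmetic facts about binary-encoded $+$, $\cdot$, $\binom{\cdot}{2}$, and $\le$ is exactly the right toolkit, and the inductive steps you propose for each of the four items are semantically and proof-theoretically sound; in particular the reduction of item~(2) to $Count_n = Count_{n-1}+X_n$ together with the arithmetic identity $\binom{m+b}{2}=\binom{m}{2}+b\cdot m$ for $b\in\{0,1\}$ is correct, and the resulting $O(n)\cdot\mathrm{poly}(n)$ proof is still polynomial. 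You also correctly flag the one genuinely delicate bookkeeping point: items~(2) and~(3) need either an indexing of the pair-conjunctions (resp.\ the $(i,j)$-grid) chosen to align with the $Count$ recursion tree, or a polynomial-size $LK$ proof that $Count$ is invariant under permutation of its arguments. The latter is a standard Buss-type fact (invariance under an adjacent transposition changes $O(\log n)$ internal sums, and any permutation is a composition of $O(n^2)$ such transpositions), so either option closes that gap. One small remark: in item~(4) the extra coordinates $Y_{n+1},\dots,Y_m$ only help, and your zero-padding of the $X$-side handles this cleanly. Overall the level of rigor and the places where you invoke standard $TC^0$-Frege arithmetic without reproving it match what the paper itself signals it intends for this lemma.
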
 

Finally a {\em variable substitution} in a formula will refer in this paper to substituting every variable by some other variable (not necessarily in a 1-1 manner). 

\subsection{Propositional formulation of the Kneser-Lov\'{a}sz Theorem}

We define a variable $X_{A,l}$ for every set $A\in {{n}\choose {k}}$ of
cardinality $k$, and partition class $P_{l}:=c^{-1}(\{l\})$. $X_{A,l}$ is intended
to be TRUE iff $A\in P_{l}$ and zero otherwise.

\begin{definition}

Denote by 

\begin{itemize}
\item {\bf $Ant_{k,n}$} the formula $\bigdoublewedge\limits_{A\in {{n}\choose {k}}} (\bigdoublevee\limits_{l=1}^{n-2k+1} X_{A,l})$. 

\item {\bf $Cons_{k,n}$} the formula $\bigdoublevee\limits_{\stackrel{A,B\in {{n}\choose {k}}}{A\cap B=\emptyset}}\big(\bigdoublevee\limits_{l=1}^{n-2k+1} (X_{A,l}\wedge
 X_{B,l})\big)$. 
\item {\bf $Onto_{k,n}$} the formula $\bigdoublevee\limits_{A\in {{n}\choose {k}}}\big(\bigdoublevee\limits_{\stackrel{l,s =1}{l\neq s}}^{n-2k+1} (\overline{X_{A,l}}\OR 
 \overline{X_{A,s}})\big)$
\item Finally, denote by {\bf $Kneser_{k,n}$} the formula 
$[Ant_{k,n}\vdash Cons_{k,n}]$. 
$Kneser_{k,n}$ is (by \cite{lovasz-kneser}) a tautology  with $(n-2k+1)\cdot {{{n} \choose {k}}}$
variables.
\item We will also encode the {\em onto version} of the Kneser-Lov\'asz Theorem. Indeed, denote by {\bf $Kneser^{onto}_{k,n}$} the formula 
$[Ant_{k,n}\AND Onto_{k,n}\vdash Cons_{k,n}].$
\end{itemize} 
\end{definition}

Note that formula $Kneser_{1,n}$ is essentially the Pigeonhole principle $PHP_{n-1}^{n}$. 

\section{\label{sec:lb}Lower bounds: Resolution Complexity and bounded-depth Frege proofs}

The following result shows that many lower bounds on the complexity of the pigeonhole principle apply directly 
to any family $(Kneser_{k,n})_{n}$:

\begin{theorem} For all $k\geq 1,n\geq 3$ there exists a variable substitution 
$\Phi_{k}$, \\ $\Phi_{k}:Var(Kneser_{k+1,n})$ $\longrightarrow Var(Kneser_{k,n-2})$
 such that $\Phi_{k}(Kneser_{k+1,n})$ is a formula consisting precisely of the clauses of 
$Kneser_{k,n-2}$ (perhaps repeated and in a different order). 
\label{subst-kneser} 
\end{theorem}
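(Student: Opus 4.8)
The plan is to exhibit a substitution that ``collapses'' the parameter $k+1$ down to $k$ by fixing, in every $k{+}1$-element subset of $[n]$, the presence of the two largest elements $n-1$ and $n$, and simultaneously shifting the color range. Concretely, I would set up a correspondence between sets $C \in \binom{[n]}{k+1}$ that contain both $n-1$ and $n$ and sets $A = C \setminus \{n-1,n\} \in \binom{[n-2]}{k}$; note that two such sets $C, C'$ are disjoint in $KG_{n,k+1}$ if and only if the truncated sets $A, A'$ are disjoint in $KG_{n-2,k}$, because the two ``extra'' elements are shared by all the sets in the image. The color range matches exactly: $n - 2(k+1) + 1 = (n-2) - 2k + 1$, so a color $l \in [n-2k-1]$ for $Kneser_{k+1,n}$ is already a legal color for $Kneser_{k,n-2}$, and no re-indexing of colors is needed.

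The substitution $\Phi_k$ would then be defined on the variables $X_{C,l}$ of $Kneser_{k+1,n}$ as follows. If $C \in \binom{[n]}{k+1}$ contains both $n-1$ and $n$, map $X_{C,l} \mapsto X_{C\setminus\{n-1,n\},\, l}$, a variable of $Kneser_{k,n-2}$. If $C$ fails to contain both of $n-1, n$, I need to send $X_{C,l}$ to something that makes the corresponding clauses of $Kneser_{k+1,n}$ trivially satisfied (so that, after deleting trivial clauses, exactly the clauses of $Kneser_{k,n-2}$ remain). The natural choice is to map all such variables to a single fixed variable, say $X_{A_0, 1}$ for some fixed $A_0 \in \binom{[n-2]}{k}$ — but one has to check that this does not accidentally destroy a clause we want to keep. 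I would next go clause by clause through $Ant_{k+1,n} \vdash Cons_{k+1,n}$ in CNF form: the antecedent contributes, for each $C$, the clause $\bigvee_l X_{C,l}$; for $C \supseteq \{n-1,n\}$ this becomes $\bigvee_l X_{C\setminus\{n-1,n\},l}$, exactly an antecedent clause of $Kneser_{k,n-2}$, while for the other $C$ it becomes a clause containing $X_{A_0,1}$, and I must arrange (by how the remaining literals are mapped, or by choosing a fresh always-true pattern) that these become tautological and are dropped. Dually, the consequent $Cons_{k+1,n}$, negated, contributes clauses of the form $\overline{X_{C,l}} \vee \overline{X_{D,l}}$ over disjoint pairs $C \cap D = \emptyset$; the pairs with $C, D \supseteq \{n-1,n\}$ are impossible (they cannot be disjoint), so those vanish, and the pairs where at least one of $C,D$ omits $n-1$ or $n$ must again be made trivial, while — wait — we actually need the $Cons_{k,n-2}$ clauses to \emph{appear}, so the bookkeeping must instead ensure that disjoint pairs in $\binom{[n-2]}{k}$ arise from disjoint pairs in $\binom{[n]}{k+1}$: but a disjoint pair $A,A' \subseteq [n-2]$ cannot both be extended by $\{n-1,n\}$ and remain disjoint. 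This tension is the crux.

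Resolving that tension is the main obstacle, and I expect the fix is that the consequent clauses of $Kneser_{k,n-2}$ are recovered not from pairs both containing $\{n-1,n\}$, but from a more clever pairing — e.g. extend $A$ by $\{n-1,n\}$ and $A'$ by two elements of $[n-2]\setminus(A\cup A')$ (available since $n-2 \geq 2k$ fails in general, so one instead extends on the ``high'' side asymmetrically), or one routes the $Onto$-type structure through. I would therefore revisit the choice of which two coordinates to freeze: rather than freezing $n-1,n$ in \emph{all} image sets, freeze them only in the ``$A$-side'' of each clause and use a different pair of frozen elements for the ``$B$-side'', exploiting that $Cons$ is symmetric in $A,B$. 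Once the right pairing is pinned down, verifying that (i) $\Phi_k$ is well-defined as a map $Var(Kneser_{k+1,n}) \to Var(Kneser_{k,n-2})$, (ii) every clause of $Kneser_{k,n-2}$ is the image of some clause, and (iii) every image clause is either a clause of $Kneser_{k,n-2}$ or a tautology, is a finite check. The corollary — transferring the known $2^{\Omega(n)}$ resolution and bounded-depth Frege lower bounds for $PHP \cong Kneser_{1,\cdot}$ up to all fixed $k$ by iterating $\Phi_1, \dots, \Phi_{k-1}$ — then follows since substitution instances cannot increase refutation complexity and the parameter loss is only $2k = O(1)$.
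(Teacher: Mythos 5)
Your proposal correctly identifies the central tension — a pair of disjoint $k$-sets in $[n-2]$ cannot both be extended by $\{n-1,n\}$ to disjoint $(k+1)$-sets in $[n]$ — but it does not resolve it, and two of your intermediate moves are not available under the theorem's hypotheses.

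First, a \emph{variable} substitution $\Phi_k$ must assign to each variable $X_{C,l}$ a single target variable, independently of which clause that variable happens to occur in. Your proposed fix, ``freeze $n-1,n$ on the $A$-side of each clause and a different pair on the $B$-side,'' is a per-clause choice and therefore is not a variable substitution at all. Second, mapping every ``bad'' variable (one whose set $C$ omits $n-1$ or $n$) to a fixed variable $X_{A_0,1}$ does not produce tautological clauses that can be discarded: a clause $\bigvee_l X_{C,l}$ becomes a clause containing $X_{A_0,1}$, which is neither a tautology nor a clause of $Kneser_{k,n-2}$. The theorem demands that the image consists \emph{precisely} of the clauses of $Kneser_{k,n-2}$ (allowing repetitions and reordering only), so spurious non-trivial clauses cannot be tolerated.

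The paper's resolution is different and sidesteps the tension entirely. Rather than deleting two fixed elements from each $(k+1)$-set, it maps $A\in\binom{[n]}{k+1}$ to $A_{\leq k}$, the set of its $k$ \emph{smallest} elements (with a small patch when $A\supseteq\{n-1,n\}$, in which case $A_{\leq k}\not\subseteq[n-2]$: one takes $P=A\setminus\{n-1,n\}$ and adjoins the largest index $\lambda\leq n-2$ not already in $P$). Because $A_{\leq k}\subseteq A$ in the main case, disjointness of $A,B$ immediately gives disjointness of the images, so every type-(b) clause maps to a type-(b) clause. Surjectivity onto the consequent clauses of $Kneser_{k,n-2}$ is then obtained by choosing \emph{asymmetric} preimages: given disjoint $C,D\in\binom{[n-2]}{k}$, take $A=C\cup\{n-1\}$ and $B=D\cup\{n\}$; these are disjoint $(k+1)$-subsets of $[n]$ and map back to $C$ and $D$ respectively, since the two adjoined elements are the largest and so do not affect the ``$k$ smallest'' computation. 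That asymmetry is exactly what your ``both sides get $\{n-1,n\}$'' scheme could not provide, and it is compatible with a genuine variable substitution because the map is determined by each set alone. You would need to replace your freezing idea by this ``firsts of $A$'' map to make the argument go through.
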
 
\begin{proof} 

For simplicity we will use different notations for the sets of variables of the two formulas: we assume that 
$Var(Kneser_{k+1,n})=\{X_{A,i}\}$ and $Var(Kneser_{k,n-2})=\{Y_{A,i}\}$, with obvious (different) ranges for $i$ and $A$. 

Let $A\in {{n}\choose {k+1}}$. For $1\leq i \leq n-2(k+1)+1=n-2k-1$ define $\Phi_{k}(X_{A,i})$ by: 

\begin{itemize} 
\item{\bf Case 1: $A_{\leq k}\subseteq [n-2]$:} Define
\begin{equation} 
\Phi_{k}(X_{A,i})=Y_{A_{\leq k},i}
\end{equation} 

\item{\bf Case 2:$A_{\leq k}\not \subseteq [n-2]$:} \\

In this case necessarily both $n-1$ and $n$ are members of $A$. 

Let $A=P\cup \{n-1,n\}$, $|P|=k-1$. Let $\lambda=max\{j : j\leq n-2, j\not \in P \}$. Define 

\begin{equation}
\Phi_{k}(X_{A,i})=Y_{P\cup \{\lambda\},i}.
\end{equation}
\end{itemize} 

Formula $Kneser_{k,n-2}$ has clauses of two types
\begin{itemize} 
\item{(a).} Clauses of type $Y_{A,1}\vee Y_{A,2}\vee \ldots Y_{A,n-2k-1}$, with $A\in {{n-2}\choose {k}}$. 
\item{(b).} Clauses of type $\overline{Y_{A,i}}\vee \overline{Y_{B,i}}$ with $1\leq i\leq n-2k-1$, $A,B\subseteq {{n-2}\choose {k}}, A\cap B=\emptyset$. 
\end{itemize} 

As $\Phi_{k}$ preserves the second index, every clause of type (a) of $Kneser_{k+1,n}$ maps via $\Phi_{k}$ to a clause of type (a) of $Kneser_{k,n-2}$. On the other hand every clause of type (a) is the image through $\Phi_{k}$ of some clause of $Kneser_{k,n+1}$, for instance of clause $X_{C,1}\vee X_{C,2}\vee \ldots \vee X_{C,n-2k-1}$, where $C=A\cup \{n-1\}$. 

As for clause $\overline{X_{A,i}}\vee \overline{X_{B,i}}$ of type (b), again we use the fact that $\Phi_{k}$ preserves the second index, and prove that the substituted variables correspond to disjoint subsets: 

\begin{itemize} 
\item{\bf Case I:} $A,B$ both fall in Case 1. of the definition of $\Phi_{k}$. 

Denote for simplicity $C=A_{\leq k},D=B_{\leq k}$, hence $\Phi_{k}(\overline{X_{A,i}}\vee \overline{X_{B,i}})=\overline{Y_{C,i}}\vee \overline{Y_{D,i}}$).

It follows that $C,D$ are disjoint (as $A\cap B=\emptyset$ and $C\subseteq A$, $D\subseteq B$). Note that the converse is also true: every clause $\overline{Y_{C,i}}\vee \overline{Y_{D,i}}$ is the image of clause 
$\overline{X_{A,i}}\vee \overline{X_{B,i}}$, with $A=C\cup \{n-1\}$, $B=D\cup \{n\}$. 

\item{\bf Case II:} One of the sets, say $A$, falls under Case 2, the other one, $B$,  falls under Case 1 (note that  $A$ and $B$ cannot both fall under Case 2, as they would  both contain $n-1, n$ and they would no longer be disjoint). In this case $C=P\cup \{\lambda\}, D=B_{\leq k}.$ As $\{\lambda+1,\ldots, n\} \subset A$ and $A\cap B=\emptyset$, $
\lambda+1,\ldots, n\not \in B.$ Therefore, even though it might be possible that $\lambda\in B$, certainly $\lambda\not \in B_{\leq k}$ (since there are no elements in $B$ larger than $\lambda$). Thus $C\cap D=(P\cup \{\lambda\})\cap B_{\leq k}\subseteq A\cap B=\emptyset$. 
\end{itemize} 
\end{proof}\qed

The previous result can be applied $k$ times to show the following two lower bounds:  

\begin{theorem}
For any fixed $k\geq 1$ we have $res(Kneser_{n,k})=2^{\Omega(n)}$ (where the constant might depend on $k$).
\label{res-lb}
\end{theorem}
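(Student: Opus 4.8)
The plan is to obtain the bound from the classical exponential resolution lower bound for the pigeonhole principle by pushing hardness down in the parameter $k$ using Theorem~\ref{subst-kneser}. The workhorse is the standard fact that \emph{resolution refutation size does not increase under a variable substitution}: if $\Psi$ maps variables to variables and $R$ is a resolution refutation of a CNF $F$, then applying $\Psi$ to every literal of every clause of $R$, discarding clauses that become tautologies, and splicing out the weakening steps that may result, yields a resolution refutation of the CNF $\Psi(F)$ whose clauses are the $\Psi$-images of the clauses of $F$; hence $res(\Psi(F)) \le res(F)$. In our situation Theorem~\ref{subst-kneser} guarantees that no clause ever maps to a tautology, so the only subtlety is that a substituted inference $\Psi(C_1),\Psi(C_2)\vdash\Psi(C_3)$ may degenerate (the two premises may coincide, or a premise may already subsume $\Psi(C_3)$); such degenerate steps are weakenings, which are eliminable from a resolution derivation without increasing its length.

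First I would fix $k\ge 1$ and, for $n$ large enough in terms of $k$, compose the substitutions supplied by Theorem~\ref{subst-kneser}. Put $\Psi := \Phi_{1}\circ\Phi_{2}\circ\cdots\circ\Phi_{k-1}$ (the identity if $k=1$); this is a variable substitution $\Psi : Var(Kneser_{k,n}) \to Var(Kneser_{1,n-2k+2})$. Since by Theorem~\ref{subst-kneser} each $\Phi_{j}$ sends the clause set of $Kneser_{j+1,m}$ onto that of $Kneser_{j,m-2}$ up to repetition and order, the composite $\Psi$ sends the clause set of $Kneser_{k,n}$ onto that of $Kneser_{1,n-2k+2}$. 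Because $Kneser_{1,m}$ is, as noted right after its definition, essentially the CNF encoding $PHP_{m-1}^{m}$, the substitution lemma above yields
\[
res\!\left(PHP_{n-2k+1}^{n-2k+2}\right) \;=\; res\!\left(Kneser_{1,n-2k+2}\right) \;\le\; res\!\left(Kneser_{k,n}\right).
\]

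Finally I would invoke the well-known exponential lower bound on resolution refutations of the pigeonhole principle, $res(PHP_{m-1}^{m}) = 2^{\Omega(m)}$ \cite{buss-frege-php}. Substituting $m = n-2k+2$ and using that $k$ is a fixed constant, this gives $res(Kneser_{k,n}) \ge 2^{\Omega(n-2k+2)} = 2^{\Omega(n)}$, with the hidden constant depending on $k$, which is exactly the claim.

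The single step I expect to require care is the substitution lemma in the first paragraph: although folklore, it must be stated precisely enough to see that the transformed, possibly degenerate derivation can be turned back into a bona fide resolution refutation with no blow-up in the number of lines. Once that is in place, composing the $\Phi_{j}$ and reading off the asymptotics are purely routine.
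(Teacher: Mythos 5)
Your proposal is correct and takes the same route as the paper: iterate the substitution from Theorem~\ref{subst-kneser} to reduce $Kneser_{k,n}$ to $Kneser_{1,n-2k+2}=PHP_{n-2k+1}^{n-2k+2}$, use the fact that resolution refutation size is non-increasing under variable substitution (the paper's Lemma~\ref{subst-resolution}), and invoke the classical exponential resolution lower bound for the pigeonhole principle. Your treatment of the substitution lemma is somewhat more careful than the paper's one-line statement --- explicitly addressing tautological clauses and degenerate inferences --- but the underlying argument is identical.
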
 
\begin{proof} 
The result follows from the following simple 

\begin{lemma} 
Let $\Phi$ be a propositional formula let $X\stackrel{\phi}{\rightarrow} Y$ be a variable substitution and let 
$\Xi = \Phi[X\stackrel{\phi}{\rightarrow} Y]$ be the resulting formula. Assume that $P=C_{1},C_{2},\ldots, C_{r}$ is a resolution refutation of $\Phi$ and let $\phi(P)=\phi(C_{1}),\phi(C_{2}),\ldots, \phi(C_{r})$. Then $\phi(P)$ is a resolution refutation of $\Xi$. Consequently $res(\Xi)\leq res(\Phi)$. 
\label{subst-resolution} 
\end{lemma} 
\begin{proof} 

Similar, more powerful (less trivial) results of this type were explicitly stated, e.g. in \cite{ben2011understanding}. 
\end{proof}\qed

\end{proof}\qed

Similarly 
\begin{theorem} 
For any fixed $k\geq 1$ and arbitrary $d\geq 1$ there exists $\epsilon_{d}>0$ such that the family $(Kneser_{n,k})$ has 
$\Omega(2^{n^{\epsilon_{d} }})$ depth-$d$ Frege proofs
\label{bdfrege-lb}
\end{theorem}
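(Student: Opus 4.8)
The plan is to combine the variable-substitution reduction of Theorem~\ref{subst-kneser} with the known exponential lower bounds on bounded-depth Frege refutations of the pigeonhole principle, exactly as was done for resolution in Theorem~\ref{res-lb}. The starting point is the fact (due to Pitassi--Beame--Impagliazzo and Kraj\'i\v{c}ek--Pudl\'ak--Woods, see~\cite{buss-frege-php}) that for every fixed depth $d$ there is $\epsilon_d>0$ such that any depth-$d$ Frege refutation of $PHP^{m}_{m-1}$ has size $2^{m^{\epsilon_d}}$; since $Kneser_{1,n}$ is essentially $PHP^{n}_{n-1}$, this gives the base case $k=1$.

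The inductive step is a bounded-depth analogue of Lemma~\ref{subst-resolution}: applying a variable substitution $\phi$ to every line of a depth-$d$ Frege proof of $\Phi$ yields a depth-$d$ Frege proof of $\Phi[X\stackrel{\phi}{\to}Y]$ of no larger size, because substitution commutes with the Frege rules and does not increase formula depth (a variable is replaced by a variable, not by a formula, so no nesting is created). Hence $\Phi_k$ from Theorem~\ref{subst-kneser} turns a depth-$d$ Frege refutation of $Kneser_{k,n-2}$ of size $S$ into a depth-$d$ Frege refutation of (a trivial rearrangement/duplication of the clauses of) $Kneser_{k,n-2}$ coming from $Kneser_{k+1,n}$; running the contrapositive, a short depth-$d$ proof of $Kneser_{k+1,n}$ would give a short depth-$d$ proof of $Kneser_{k,n-2}$. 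I would then iterate this $k-1$ times down to $Kneser_{1,n-2(k-1)}$ and invoke the PHP lower bound: a size-$S$ depth-$d$ proof of $Kneser_{k,n}$ yields a size-$O(S)$ depth-$d$ proof of $PHP^{n-2(k-1)}_{n-2(k-1)-1}$, forcing $S\ge 2^{\Omega(n^{\epsilon_d})}$ with the implied constants absorbing the finitely many shifts by $2$ in the index (legitimate since $k$ is fixed).

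One technical wrinkle to address is that $Kneser_{k,n}$ is stated as a sequent $Ant_{k,n}\vdash Cons_{k,n}$ rather than as a clause set, whereas the PHP lower bounds are usually phrased for CNF refutations. This is handled in the standard way: a Frege (LK) proof of the sequent $Ant_{k,n}\vdash Cons_{k,n}$ is, up to polynomial size and bounded additional depth, the same thing as a refutation of the CNF $Ant_{k,n}\wedge\neg Cons_{k,n}$, whose clauses are exactly the type-(a) and type-(b) clauses used in the proof of Theorem~\ref{subst-kneser}; and the same translation applies at the $PHP$ end. Since this negation/clausification step changes depth by an additive constant and size polynomially, it does not affect the conclusion once $\epsilon_d$ is chosen for the appropriate depth.

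The main obstacle I expect is not any single hard lemma but the bookkeeping around depth: one must check that the substitution really does preserve depth-$d$-ness (immediate for variable-to-variable substitutions), that the passage between the sequent form and the CNF form costs only constant depth, and that the lower bound is applied at the depth $d'=d+O(1)$ actually needed after these translations, so the final statement is obtained with a possibly smaller $\epsilon_d$. None of this is deep, but it is the part that must be done carefully; everything else is an immediate corollary of Theorem~\ref{subst-kneser} and the cited PHP lower bounds.
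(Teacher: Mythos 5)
Your proposal is correct and matches the paper's approach exactly: iterate the variable substitution of Theorem~\ref{subst-kneser} $k-1$ times to reduce to $Kneser_{1,n-2(k-1)}=PHP^{n-2(k-1)}_{n-2(k-1)-1}$, observe that a variable-to-variable substitution preserves depth-$d$ Frege proofs without size increase, and invoke the known bounded-depth Frege lower bound for PHP. The paper's own proof is a one-line citation of \cite{krajicek-pudlak-woods}; the depth-preservation and sequent-vs-CNF bookkeeping you flag are the right details to check, and they go through as you describe.
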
 
\begin{proof} 

We employ the  the corresponding bound for $PHP^{n}_{n-1}(=Kneser_{1,n})$ \cite{krajicek-pudlak-woods}.
\end{proof}\qed 

\subsection{Extension: lower bounds on the proof complexity of Schrijver's theorem} 

We can prove (stronger) bounds similar to those of Theorems~\ref{res-lb} and~\ref{bdfrege-lb} for Schrijver's formulas by noting that the following variant of Theorem~\ref{subst-kneser} holds: 

\begin{theorem} 
For every $k\geq 1, n\geq 3$ there exists a variable substitution $\Phi_{k}$, $\Phi_{k}:Var(Sch_{k+1,n})$ $\longrightarrow Var(Sch_{k,n-2})$
 such that $\Phi_{k}(Sch_{k+1,n})$ is a formula consisting precisely of the clauses of 
$Sch_{k,n-2}$ (perhaps repeated and in a different order). 
\label{subst-tucker}
\end{theorem}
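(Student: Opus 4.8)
The plan is to mimic the construction of $\Phi_k$ from Theorem~\ref{subst-kneser}, checking that it respects the additional stability constraint, i.e.\ that it maps variables indexed by stable sets to variables indexed by stable sets, and that it still hits every clause of $Sch_{k,n-2}$. Recall that $Sch_{k,n}$ is the analogue of $Kneser_{k,n}$ in which the sets $A,B$ range only over ${{n}\choose{k}}_{stab}$, the $k$-element stable subsets of $[n]$ with respect to the cyclic adjacency on $[n]$. So the formulas have fewer variables and fewer clauses, and the only thing that needs verification is that the same substitution rule, suitably reinterpreted, behaves well with respect to stability.

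First I would define $\Phi_k$ on $Var(Sch_{k+1,n})$ exactly as before: given a stable set $A\in{{n}\choose{k+1}}_{stab}$, if $A_{\leq k}\subseteq[n-2]$ set $\Phi_k(X_{A,i})=Y_{A_{\leq k},i}$; otherwise (so $n-1,n\in A$, say $A=P\cup\{n-1,n\}$ with $|P|=k-1$) set $\Phi_k(X_{A,i})=Y_{P\cup\{\lambda\},i}$ where $\lambda=\max\{j\le n-2:\ j\notin P\}$. The key new observations are: (i) in Case~1, $A_{\leq k}$ is a stable subset of $[n-2]$ — stability is inherited by subsets under the cyclic adjacency on $[n]$, and since $A_{\leq k}\subseteq[n-2]$ none of the "wrap-around" adjacencies $\{n-2,n-1\},\{n-1,n\},\{n,1\}$ of the original cycle are relevant, so $A_{\leq k}$ is stable for the cycle on $[n-2]$ as well; (ii) in Case~2, note that $A$ stable and $\{n-1,n\}\subseteq A$ is actually impossible, since $n-1$ and $n$ are cyclically adjacent in $[n]$ — so in fact \emph{Case~2 never occurs} for Schrijver formulas, which makes this direction strictly simpler than Theorem~\ref{subst-kneser}. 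Hence $\Phi_k$ reduces to the single rule $\Phi_k(X_{A,i})=Y_{A_{\leq k},i}$, and I should double-check that every stable $A\in{{n}\choose{k+1}}_{stab}$ indeed satisfies $A_{\leq k}\subseteq[n-2]$: if $A$ contained both $n-1$ and $n$ it would violate stability, and if it contained $n$ but not $n-1$ then $A_{\leq k}$ still might contain $n$; however $|A|=k+1$ so $A_{\leq k}$ drops the largest element, and the largest element of a stable set containing $n$ is $n$ (since $1$ and $n$ are adjacent, $1\notin A$, so it is safe), hence $A_{\leq k}\subseteq[n-1]$, and a small additional case analysis (if $n-1\in A_{\leq k}$ then since $n-1$ is not adjacent to anything dropped we are fine, but we still need $A_{\leq k}\subseteq[n-2]$) — this is the one place requiring a little care.

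With $\Phi_k$ pinned down, the remainder is a transcription of the proof of Theorem~\ref{subst-kneser}. Type-(a) clauses $\bigvee_i Y_{A,i}$ with $A\in{{n-2}\choose{k}}_{stab}$ are obtained as images of $\bigvee_i X_{C,i}$ for a suitable stable lift $C\in{{n}\choose{k+1}}_{stab}$; here I need to exhibit a stable $C$ with $C_{\le k}=A$, and $C=A\cup\{n-1\}$ works (since $A\subseteq[n-2]$, adding $n-1$ creates at most the adjacency $\{n-2,n-1\}$, which is avoided if $n-2\notin A$; if $n-2\in A$ one uses $C=A\cup\{n\}$ instead, again checking stability in the cyclic order on $[n]$). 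For type-(b) clauses $\overline{Y_{C,i}}\vee\overline{Y_{D,i}}$ with $C,D\in{{n-2}\choose{k}}_{stab}$ disjoint, disjointness of the preimages follows exactly as in Case~I of Theorem~\ref{subst-kneser}: pick disjoint stable lifts $A\supseteq C$, $B\supseteq D$ in ${{n}\choose{k+1}}_{stab}$ (e.g.\ $A=C\cup\{n-1\}$, $B=D\cup\{n\}$, checking stability), and then $\Phi_k$ preserves the second index and sends $\overline{X_{A,i}}\vee\overline{X_{B,i}}$ to the desired clause. Conversely, any clause $\overline{X_{A,i}}\vee\overline{X_{B,i}}$ of $Sch_{k+1,n}$ maps to $\overline{Y_{A_{\le k},i}}\vee\overline{Y_{B_{\le k},i}}$ with $A_{\le k}\cap B_{\le k}\subseteq A\cap B=\emptyset$ and both stable, hence a genuine type-(b) clause of $Sch_{k,n-2}$; the antecedent clauses map as noted. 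Therefore $\Phi_k(Sch_{k+1,n})$ is exactly the clause set of $Sch_{k,n-2}$ up to repetition and reordering.

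The main obstacle is purely bookkeeping rather than conceptual: making sure that every "lift" $C$ of a stable set of $[n-2]$ to a stable set of $[n]$ used in the surjectivity arguments actually exists and is stable in the cyclic sense, given that the cycle on $[n]$ has different wrap-around edges than the cycle on $[n-2]$. This amounts to a handful of small case distinctions on whether $1$, $n-2$, $n-1$, $n$ lie in the relevant sets — routine, but it is the step most likely to hide an off-by-one error. Once this is settled, combining Theorem~\ref{subst-tucker} with Lemma~\ref{subst-resolution} and the substitution argument used for Theorems~\ref{res-lb} and~\ref{bdfrege-lb} immediately yields $res(Sch_{k,n})=2^{\Omega(n)}$ and the corresponding bounded-depth Frege lower bounds, since iterating $\Phi_k$ reduces $Sch_{k,n}$ down to $Sch_{1,\cdot}$, which is again essentially a pigeonhole formula.
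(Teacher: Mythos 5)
Your construction is the same as the paper's, and items (ii) and (iii) of your outline match the paper's verification (Case~2 never arises for stable $A$; the lifts of type-(a) and type-(b) clauses are obtained by adjoining $n-1$ or $n$ to a stable set of $[n-2]$ with a small case split on whether $n-2$ or $1$ is present). However, your justification of item (i) --- that $C=A_{\leq k}$ is stable in the \emph{cyclic} order on $[n-2]$ --- is wrong, and the auxiliary ``care'' discussion addresses a non-issue. You argue that since $A_{\leq k}\subseteq[n-2]$, ``none of the wrap-around adjacencies $\{n-2,n-1\},\{n-1,n\},\{n,1\}$ of the original cycle are relevant.'' But those are not what needs checking. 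The cycle on $[n-2]$ introduces a \emph{new} adjacency $\{1,n-2\}$ that is not present in the cycle on $[n]$, so stability of $A$ in $[n]$ does not automatically rule out having both $1\in A_{\leq k}$ and $n-2\in A_{\leq k}$. Meanwhile, the worry you do pursue --- whether $A_{\leq k}\subseteq[n-2]$ for a stable $A\in{{n}\choose{k+1}}_{stab}$ --- is automatic and not where the difficulty lies: if $n\in A$ then $n$ is the maximum of $A$ and is dropped when forming $A_{\leq k}$, and if $n-1\in A_{\leq k}$ then $A$ would also contain a larger element, necessarily $n$, contradicting stability of the pair $\{n-1,n\}$. (In particular your aside ``if it contained $n$ but not $n-1$ then $A_{\leq k}$ still might contain $n$'' is false.)

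The correct argument for the missing point, which is the one the paper gives, is as follows. Suppose, toward a contradiction, that both $1$ and $n-2$ lie in $C=A_{\leq k}$. Since $n-2\in A_{\leq k}$ and $|A|=k+1>k$, the set $A$ has an element strictly larger than $n-2$, hence $n-1\in A$ or $n\in A$. But $1\in A$ together with stability of $A$ in $[n]$ forbids $n\in A$, and $n-2\in A$ together with stability forbids $n-1\in A$. This is a contradiction, so $\{1,n-2\}\not\subseteq C$ and $C$ is stable in $[n-2]$. With this repaired, the rest of your sketch goes through and the resulting lower bounds for $Sch_{k,n}$ via Lemma~\ref{subst-resolution} are as you describe.
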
 

\begin{proof} 
Substitution $\Phi_{k}$ is exactly the same as the one in the proof of Theorem~\ref{subst-kneser}. In this case we need to further argue three things: 
\begin{itemize} 
\item[(1)] If $\Phi_{k}$ maps $X_{A,i}$ onto  $Y_{C,i}$ and $A$ is stable then so is $C$. 
\item[(2)] Every clause $Y_{C,1}\vee Y_{C,2}\vee \ldots \vee Y_{C,n-2k-1}$ of $Sch_{k,n-2}$ is the image of a clause $X_{A,1}\vee X_{A,2}\vee \ldots \vee X_{A,n-2k-1}$ with $A$ stable. 
\item[(3)] Every clause $\overline{Y_{C,i}}\vee \overline{Y_{D,i}}$ of $Sch_{k,n-2}$ is the image of a clause $\overline{X_{A,i}}\vee \overline{X_{B,i}}$ 
with $A,B$ disjoint and {\em stable}.
\end{itemize} 

\begin{itemize} 
\item[(1)] If $A_{\leq k}\subseteq [n-2]$ then $C=A_{\leq k}$ satisfies the stability condition everywhere except perhaps at elements 1 and n-2. But if $1\in C\subseteq A$ then $n\not \in A$ (as $A$ is stable). Similarly $n-1\not \in A$. This contradicts the fact that $A$ must contain one of $n-1,n$. 

On the other hand it is not possible that $A$ falls under Case 2, as it would have to contain successive elements $n-1,n$. 
\item[(2)] Since $C$ is stable, one of  $1,n-2$ is {\bf not} in $C$. Define $A$ to consist of $C$ together with the unique element in $n-1,n$ not forbidden by stability. 
\item[(3)] Similarly to (2): given disjoint stable sets $C$,$D$ in $[n-2]$ obtain $A$ and $B$ by adding the elements $n-1,n$ to $C,D$, one to each set, respecting the stability condition. This is possible as $C$ and $D$ are disjoint. For instance, if $n-2\in C$ then $1\not \in C$, and we distribute $n$ in $C$ and $n-1$ in $D$. 
\end{itemize} 
\end{proof}\qed

\section{\label{sec:23}The cases $k=2$ and $k=3$ of the Kneser-Lov\'asz Theorem}

Unlike the general case, for $k\in \{2,3\}$ Kneser's conjecture has
combinatorial proofs \cite{stahl-jcb},\cite{garey-johnson-coloring}. This facts motivates the following
theorem, similar to the one proved in \cite{buss-frege-php}  for the
Pigeonhole Principle:

\begin{theorem} The following are true: 
\begin{itemize}
\item{(a)} The class of formulas $Kneser^{onto}_{2,n}$ has polynomial size Frege proofs.
\item{(b)} The class of formulas $Kneser^{onto}_{3,n}$ has polynomial size {\em extended} Frege proofs. 
\end{itemize} 
\label{ub}
\end{theorem}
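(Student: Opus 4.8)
The plan is to mimic Buss's polynomial-size Frege proof of $PHP_n^{n+1}$, replacing "objects" (pigeons) by $k$-subsets of $[n]$ and "holes" by the colors $1,\dots,n-2k+1$, and using the combinatorial proofs of the Kneser–Lov\'asz theorem for $k=2,3$ (due to Stahl, resp.\ Garey–Johnson) to control a counting argument. The key observation is that these combinatorial proofs color $KG_{n,k}$ with exactly $n-2k+2$ colors by a \emph{recursive} scheme: one peels off a controlled family of sets that can be colored with $O(1)$ new colors (the "hard core" around $\{1,\dots,2k-1\}$ or a similar structure) and recurses on a Kneser graph with a smaller $n$. Arguing that $Kneser^{onto}_{k,n}$ has a short refutation then amounts to showing, inside $LK$ (resp.\ $EF$), that any purported proper $(n-2k+1)$-coloring of $KG_{n,k}$ leads to a contradiction by following this recursion and, at the base, invoking a pigeonhole-style counting bound. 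The $Onto_{k,n}$ hypothesis guarantees that $c$ is a genuine function (each set gets exactly one color), which is what makes the counting clean and is exactly analogous to the onto-PHP that Buss handles.

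Concretely, I would proceed as follows. First, set up the arithmetic machinery: using Lemma~\ref{frege-count}, define inside $LK$ the binary-encoded count of how many $k$-subsets of $[n]$ receive a given color $l$, and prove the basic identities — additivity over a partition of the set of $k$-subsets, and monotonicity — that let me add up color-class sizes and compare them with ${n \choose k}$. Second, formalize the relevant combinatorial proof. For $k=2$ one uses the fact that $KG_{n,2}$ can be properly $(n-2)$-colored by an explicit scheme and that any proper coloring with fewer colors forces, on a suitable induced sub-Kneser-graph isomorphic to $KG_{n-2,1}$ (i.e.\ a structure governed by PHP), a collision; the Frege proof then reduces to an instance of $PHP$-style counting already known to have polynomial-size Frege proofs. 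For $k=3$ the recursion is one step deeper / uses a more complex case analysis (this is why only $EF$, not Frege, is claimed), so one introduces extension variables naming the restrictions of the coloring to the recursively-defined subgraphs and the associated counts, and carries out the induction on $n$ in $O(n)$ stages, each stage of polynomial size, yielding a polynomial-size $EF$ proof overall. Third, assemble: the top-level derivation chains the $O(n)$ stages and closes with the base-case contradiction.

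The main obstacle I expect is faithfully translating the combinatorial (nonuniform, recursively defined) colorings of $KG_{n,2}$ and $KG_{n,3}$ into a \emph{uniform, polynomially bounded} sequence of $LK$/$EF$ inferences — in particular, pinning down which induced subgraph is "the next Kneser graph" in a way that is definable by a short formula, and proving \emph{within} the proof system that the restriction of a proper coloring to that subgraph is again proper and uses one fewer color. The counting identities and the reduction to $PHP$ at the base are routine given Lemma~\ref{frege-count} and \cite{buss-frege-php}; the delicate part is the bookkeeping of the recursion, especially for $k=3$ where the case analysis in the Garey–Johnson argument is what forces the jump from Frege to extended Frege (the extension variables are needed precisely to name the intermediate colorings compactly). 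A secondary point requiring care is verifying that the $Onto_{k,n}$ clauses are actually used — i.e.\ that without them the counting step would fail — and flagging (as the paper does elsewhere) that the non-onto version is handled by the lower-bound side rather than here.
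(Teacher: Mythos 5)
Your proposal diverges from the paper on the crucial part, namely part (a), and the divergence is not a harmless alternative route: it would fail to establish the Frege bound. You propose to handle $k=2$ by the same recursive scheme as $k=3$ --- find a color class all of whose members share an element, delete that element and that color, recurse on a smaller Kneser graph, and close with PHP-style counting at the base. The paper explicitly flags this strategy and explains why it is inadequate for (a): eliminating one ground element and one color requires re-indexing (renaming) the remaining variables, and in Buss's framework this renaming is exactly what forces \emph{extended} Frege rather than Frege. Your own account of $k=3$ acknowledges this ("the extension variables are needed precisely to name the intermediate colorings compactly"), but the same obstruction applies to the recursion you propose at $k=2$; the extra "depth of case analysis" at $k=3$ is not what separates Frege from EF here. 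So as written, your $k=2$ argument would only give an extended Frege upper bound, i.e.\ it would not prove (a).

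What the paper actually does for $k=2$ is a non-recursive, global counting argument carried out on a \emph{single} instance, with no induction on $n$ and no renaming. The structural input is Lemma~\ref{claim1}: every color class either contains two disjoint pairs, or has size at most $3$, or is a "sunflower" through a single common element. From this one bounds, color by color, the running totals $M_r$ (sets assigned a color $\le r$) against an explicitly computable $N_r$ built from $p_r$ (number of large sunflower classes seen so far) and the trivial $3$-bound on the small classes; the punchline is $M_{n-3}=\binom{n}{2}$ while $N_{n-3}\le\binom{n}{2}-3$. All of this lives inside Buss's $Count$ calculus on the original variable set, so it stays in Frege. Your $k=3$ sketch (induction on $n$ via element/color deletion, formalized with extension variables) does match the paper's approach, and the paper also records why the $k=2$-style counting fails at $k=3$ (the $3n-8$ size bound on non-sunflower color classes grows with $n$, so the final inequality no longer closes). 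If you want to repair (a), replace the recursion by the single-instance counting argument built on the trichotomy lemma; the recursion is fine for (b) only.
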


\begin{proof}

Informally, the basis for the combinatorial proofs in \cite{garey-johnson-coloring},
\cite{stahl-jcb} of cases $k\in \{2,3\}$ is the following claim, only valid for these values of $k$:  any partition of ${{n}\choose {k}}$ into classes $P_{1}, P_{2}, \ldots ,P_{n-2k+1}$ contains at least one class $P_{j}$ such that either $\bigcap_{A\in P_{j}} A \neq \emptyset. \mbox{ or }
 A\cap B=\emptyset \mbox{ for some }A,B\in P_{j}$.  

This claim could be used  as the basis for the propositional simulation of the proofs from  
\cite{stahl-jcb} and \cite{garey-johnson-coloring}, respectively. 
This strategy only leads to {\em extended Frege}, rather than Frege proofs for $Kneser_{k,n}$. The reason is that we eliminate one element from $\{1,\ldots, n\}$ and one class from the partition. Similar to the case of PHP in \cite{buss-frege-php}, doing so involves renaming, leading to extended Frege proofs. 

For $k=2$ we will bypass the problem above by giving a stronger, counting-based proof of $Kneser_{2,n}$.
We will then explain why a similar strategy apparently does not work for $k=3$ as well.
In both situations, $k\in \{2,3\}$ below we first present the mathematical argument, then discuss how to formalize it in 
(extended) Frege. 


\subsection{\bf Case $k=2$ }


\subsubsection*{Mathematical (semantic) proof.} 

The result follows from the following sequence of claims: 

\begin{lemma} Given any (n-3)-coloring $c$ of ${{n}\choose {2}}$ and color $1\leq l \leq n-3$, at least one of the following alternatives is true: 
\begin{enumerate}
\item there exist two disjoint sets $D,E\in c^{-1}(l)$.
\item  $|c^{-1}(l)|\leq 3$.
\item there exists $x\in [n]$, $x\in \bigcap\limits_{A\in c^{-1}(l)} A$.
\end{enumerate} 
\label{claim1}
\end{lemma} 
 
\begin{proof} 
Assume that $D=\{a,b\}\in c^{-1}(l)$ and there is a set $E\in c^{-1}(l)$, $a\not\in E$, then either $D\cap E=\emptyset$ or $E=\{b,c\}$, for some $c$. If $\bigcap\limits_{A\in c^{-1}(l)} A=\emptyset$ then there exists another set $F$ with $b\not\in F$. $F$ has to intersect both $D$ and $E$, thus $F=\{a,c\}$. Hence $|c^{-1}(l)|\leq 3$. 
\end{proof}\qed

Define, for $r\geq 0$
\[
p_{r}=|\{1\leq \lambda \leq r: |c^{-1}(\lambda)|\geq 4\mbox{ and } \bigcap\limits_{A\in c^{-1}(\lambda)} A\neq \emptyset \}|,
\]
\[
s_{r}=|\{i\in [n]: \bigcap\limits_{A\in c^{-1}(\lambda)} A =\{i\}\mbox{ for some }1\leq \lambda \leq r \mbox{ with }|c^{-1}(\lambda)|\geq 4\}|,
\]
 (call such an $i$ counted by $s_{r}$ {\em special}) 
\[
M_{r}=\sum_{i=1}^{r} |c^{-1}(i)|,\mbox{ }
N_{r}=p_{r}(n-1) - \frac{p_{r}(p_{r}-1)}{2}+ 3(r-p_{r})
\]

\begin{lemma}
Sequences $M_{r},N_{r}$ are monotonically 
increasing.
\end{lemma} 
\begin{proof} 

First $p_{r+1}-p_{r}\in \{0,1\}$. Next $M_{r+1}-M_{r}=|c^{-1}(r+1)|\geq 0$. Finally, $N_{r+1}-N_{r}=3$ if $p_{r+1}=p_{r}$,  $N_{r+1}-N_{r}=(n-1)-p_{r}$ if $p_{r+1}-p_{r}=1$. In this latter case $p_{r}= p_{r+1}-1\leq (n-3)-1 = n-4$ hence $N_{r+1}-N_{r}\geq 3.$ 
\end{proof}\qed

We now prove  the following result: 

\begin{lemma}
For $1\leq r\leq n-3$, $M_{r} \leq  N_{r}$. 
\end{lemma} 
\begin{proof}

First $
s_{r}(n-1)-\frac{s_{r}(s_{r}-1)}{2}\leq p_{r}(n-1)-\frac{p_{r}(p_{r}-1)}{2}$. 
Indeed, the left hand side is 
\begin{flalign*}
& s_{r}(n-1)-(0+1+\ldots s_{r}-1)= \\ 
& = (n-1)+(n-1-1)+(n-1-2)+\ldots + (n-1-s_{r}+1)\\ 
& = (n-1)+(n-2)+\ldots + (n-s_{r})
\end{flalign*}
and similarly for the right-hand side. The desired inequality follows from the fact that $s_{r}\leq p_{r}$, valid since a special $i$ may be counted for two different $\lambda$. 

We prove the lemma by showing the stronger inequality
\begin{equation}
M_{r}\leq s_{r}(n-1) - \frac{s_{r}(s_{r}-1)}{2}+ 3(r-p_{r})
\label{stronger}
\end{equation}
 
The first two terms of the right-hand side of~(\ref{stronger}) count sets $\{p,q\}\in {{n}\choose {2}}$ with at least one special element. Indeed $s_{r}(n-1)$ is the number of {\em pairs} $(i,j)$ with $i\neq j$ and $i$ special. This formula overcounts {\em sets} with at least one special element when $j$ is special too (and set $\{i,j\}$ is counted for both pairs $(i,j)$ and $(j,i)$). The number of such pairs is 
precisely $\frac{s_{r}(s_{r}-1)}{2}$. 

Now $M_{r}$ sums up cardinalities of color classes $1$ to $r$. For those $\lambda$'s in $[r]$ such that $|c^{-1}(\lambda)|\geq 4$ and all sets in the color class intersects at a special $i$, all these sets contain a special value, hence they are also counted by the right-hand side of~(\ref{stronger}). The difference is made by the remaining $\lambda$'s (there are $r-p_{r}$ of them). By Claim~\ref{claim1} they add at most $3(r-p_{r})$ sets to $M_{r}$, establishing the desired result. 
\end{proof}\qed

\begin{lemma} We have $N_{n-3}\leq {{n}\choose {2}}-3$. 
\end{lemma} 
\begin{proof}
$N_{n-3}=(n-1)+(n-2)+\ldots + (n-p_{n-3})+ 3(n-3-p_{n-3})$. But $3(n-3-p_{n-3})\leq 3+4+\ldots + (n-p_{n-3}-1)$ hence 
\[
N_{n-3}\leq 3+4+\ldots + (n-1)=n(n-1)/2 - 1 - 2 = {{n}\choose {2}} - 3. 
\]
\end{proof}\qed

Now Theorem~(\ref{ub}) (a) follows by setting $r=n-3$. The right-hand side is 
${{n}\choose {2}}-3$. But there are $M_{n-3}={{n}\choose {2}}$ sets to cover. 
 \end{proof}\qed

\subsubsection*{Propositional simulation.}

Now we start translating the above proof into sequent calculus LK. We will sketch the nontrivial steps of the translation. Tedious but straightforward computations shows that all these steps amount to polynomial length proofs.

Lemma~\ref{claim1} can, for instance, be polynomially simulated as follows: 

\begin{lemma}For $n\geq 5$ and $1\leq l\leq n-3$ define the propositional formula \\ $Int_{n,l}[(X_{S,l})_{S\in {{n}\choose{2}}}]$ to be 
\[
\bigvee_{\stackrel{D,E\in {{n}\choose {2}}}{D\cap E=\emptyset}} (X_{D,l}\AND X_{E,l}) \OR [Count[(X_{S,l})]\leq 3]\OR \bigvee_{i\in [n]} \big(\bigwedge_{i\not \in S}\overline{X_{S,l}}\big).
\]
Here $Count$ are Buss's counting formulas. Then for every $1\leq l \leq n-3$ formula $
Ant_{n,2} \vdash Int_{n,l}$ 
has proofs of polynomial length in sequent calculus LK. 
\label{deriv:2}
\end{lemma}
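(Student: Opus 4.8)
The plan is to give a direct, syntactic proof in $LK$ that mirrors the case analysis in the semantic argument of Lemma~\ref{claim1}, tracking each combinatorial alternative by an explicit propositional disjunct. Fix $l$. Working from the antecedent $Ant_{n,2}$ (which only asserts that every pair gets a color), note that the three disjuncts of $Int_{n,l}$ are exhaustive in the following sense: the negation of the third disjunct, namely $\bigwedge_{i\in[n]}\bigvee_{i\notin S}X_{S,l}$, says that for every vertex $i$ there is a pair $S$ of color $l$ avoiding $i$; the negation of the first says no two pairs of color $l$ are disjoint. So the core of the proof is the purely propositional implication
\[
\bigwedge_{i\in[n]}\Big(\bigvee_{i\notin S}X_{S,l}\Big)\ \AND\ \bigwedge_{\stackrel{D,E\in{{n}\choose{2}}}{D\cap E=\emptyset}}\overline{(X_{D,l}\AND X_{E,l})}\ \vdash\ Count[(X_{S,l})]\leq 3,
\]
and $Int_{n,l}$ itself follows by two applications of the $\OR$-introduction/cut structure of $LK$. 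Note that $Ant_{n,2}$ is essentially not needed beyond well-formedness; I would still carry it for uniformity with the surrounding lemmas.

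Next I would formalize the combinatorial heart of Lemma~\ref{claim1} as a short case split. Pick any vertex, say by instantiating the first conjunct at $i=1$, to get some pair $D=\{a,b\}$ (formally: a disjunction over all such $D$ of $X_{D,l}$); for each such $D$ branch. Instantiate the conjunct again at $i=a$ to obtain a pair $E$ with $a\notin E$ and $X_{E,l}$; since $X_{D,l}\AND X_{E,l}$ together with $D\cap E=\emptyset$ is ruled out by the second conjunct, we derive that $E$ meets $D$, hence $b\in E$, i.e. $E=\{b,c\}$ for some $c\notin\{a,b\}$. Finally instantiate at $i=b$ to get a pair $F$ with $b\notin F$, $X_{F,l}$; again disjointness with both $D$ and $E$ is excluded, forcing $a\in F$ and $c\in F$, so $F=\{a,c\}$. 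At this point every pair of color $l$ must meet all three of $\{a,b\},\{b,c\},\{a,c\}$, and an elementary argument (a pair missing all of $a,b,c$ would be disjoint from $\{a,b\}$, a pair containing exactly one of them still misses another) shows $c^{-1}(l)\subseteq\{\{a,b\},\{b,c\},\{a,c\}\}$; translated propositionally, $\bigwedge_{S\not\subseteq\{a,b\},\,S\ne\{b,c\},\,S\ne\{a,c\}}\overline{X_{S,l}}$, which is at most three true variables. Converting "at most three of the $X_{S,l}$ are true" into $Count[(X_{S,l})]\leq 3$ uses monotonicity of $Count$ under pointwise $\leq$ from Lemma~\ref{frege-count}(4) (comparing $(X_{S,l})$ against the indicator of the fixed three-element set, whose count is provably $3$ by part (1)), giving $Count[(X_{S,l})]\leq 3$ and hence the middle disjunct of $Int_{n,l}$.

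Assembling: the outer structure is a nested cut over the (polynomially many) choices of $D$ at vertex $1$, inside each branch a cut over choices of $c$, and inside each of those an $O(n^2)$-size derivation of the three-variable bound followed by the $Count$-monotonicity step. The total number of branches is $O(n^2)\cdot O(n)=O(n^3)$, each branch of polynomial size, so the whole proof is polynomial, as claimed. The step I expect to be the main obstacle is the bookkeeping in the case analysis itself — keeping the instantiations $i=1,a,b$ and the resulting existential pairs $D,E,F$ straight as honest $LK$ disjunctions rather than informal choices, and in particular handling the degenerate branches (e.g. when the "avoiding" pair happens to coincide with one already named, or small-$n$ boundary cases) so that the final implication $\bigwedge_{S\notin\{\{a,b\},\{b,c\},\{a,c\}\}}\overline{X_{S,l}}$ is genuinely derived in every branch; once that is in place, the $Count$ estimate via Lemma~\ref{frege-count} is routine. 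This matches the paper's own remark that the translation is "tedious but straightforward."
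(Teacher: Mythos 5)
Your proposal is correct and provably polynomial-size, but it follows a genuinely different route from the paper's. The paper works forward from the \emph{middle} disjunct: it assumes $\neg[Count\leq 3]$, extracts from this (via a counting step not in Lemma~\ref{frege-count} but standard for Buss's $Count$) a disjunction $\bigvee Z^l_{A,B,C,D}$ over all 4-tuples of distinct 2-sets each asserting $X_{A,l}\wedge X_{B,l}\wedge X_{C,l}\wedge X_{D,l}$, and then applies a meta-level combinatorial lemma (``four distinct 2-sets either contain a disjoint pair or share a common element'') to split each 4-tuple case into the first or third disjunct of $Int_{n,l}$. You instead work forward from the negation of the \emph{first and third} disjuncts: the negated third disjunct ``for every $i$ some color-$l$ pair avoids $i$'' lets you adaptively extract a triangle $D=\{a,b\}$, $E=\{b,c\}$, $F=\{a,c\}$, after which every other 2-set is disjoint from one of $D,E,F$ and hence, via the negated first disjunct, has $\overline{X_{S,l}}$, so $Count\leq 3$ follows from monotonicity of $Count$. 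The trade-off is instructive: the paper's argument has cleaner, non-adaptive branching (a flat cut over $O(n^8)$ 4-tuples) but needs the fact $Count\geq 4\vdash\bigvee_{\text{4-tuples}}Z^l_{A,B,C,D}$; your argument avoids that counting step entirely (needing only the monotonicity in Lemma~\ref{frege-count}(4) plus the easy fact that the count of a fixed 3-element indicator is 3, which does not literally follow from part (1) as you cite it but is equally routine) at the cost of nested, adaptive branching where ``wrong'' choices of $E$ or $F$ must be discharged by deriving falsum from the no-disjoint-pair hypothesis. Both yield $O(n^3)$–$O(n^4)$ branches of polynomial size, and both carry $Ant_{2,n}$ only for uniformity, as you correctly observe.
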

\begin{proof}

We will apply the following trivial 

\begin{lemma} 
Let $A,B,C,D$ be four distinct subsets of cardinality 2 of [n]. Then at least one of the following alternatives holds: 
\begin{itemize}
\item At least two sets among $A,B,C,D$ are disjoint. 
\item $|A\cup B\cup C\cup D|=5$ and $|A\cap B\cap C\cap D|= 1.$
\end{itemize}
\end{lemma} 

The lemma will be used "at the meta level", that is it will not be codified propositionally, but simply used to argue for the correctness of the proof. 

Define (only for notational convenience, not as part of the Frege proof) shorthand 
\[
Z_{A,B,C,D}^{\mbox{ }l}:=X_{A,l}\AND X_{B,l}\AND X_{C,l}\AND X_{D,l}
\]

Now for any $1\leq l\leq n-3$ 
\[
Ant_{2,n},\neg [Count[(X_{S,l})_{S\in {{n}\choose{2}}}]\leq 3]\vdash \bigvee_{\stackrel{A,\ldots,D}{\mbox{\small distinct}}} (Z_{A,B,C,D}^{\mbox{ }l})
\]
On the other hand, when $|A\cap B\cap C\cap D|=\emptyset$ two of these sets must be disjoint, 
\[
\mbox{hence for such sets }Z^{l}_{A,B,C,D} \vdash \bigvee_{\stackrel{E,F\in \{A,\ldots,D\}}{E\cap F=\emptyset}} (X_{E,l}\AND X_{F,l})
\]
As for any $n\geq 5$ any two disjoint sets in ${{n}\choose {2}}$ are part of a 4-tuple of sets in 
${{n}\choose {2}}$
\[
\bigvee_{\stackrel{E,F\in \{A,B,C,D\}}{E\cap F=\emptyset}} (X_{E,l}\AND X_{F,l})\vdash \bigvee_{\stackrel{E,F\in {{n}\choose {2}}}{E\cap F=\emptyset}} (X_{E,l}\AND X_{F,l}), \mbox{ hence }
\]
\begin{equation}
Ant_{n,2},\neg [Count(X_{A,l})\leq 3]\vdash  \bigvee_{\stackrel{E,F\in \{A \ldots D\}}{E\cap F=\emptyset}} (X_{E,l}\AND X_{F,l})\mbox{  } \OR \bigvee_{\stackrel{A,B,C,D\subseteq [n]}{|A\cap B\cap C\cap D|=1}} Z^{\mbox{ }l}_{A,B,C,D}
\label{foo}
\end{equation}
Now we rewrite 
\[
\bigvee_{\stackrel{A,B,C,D\subseteq [n]}{|A\cap B\cap C\cap D|=1}}Z^{l}_{A,B,C,D}  = \bigvee_{i\in [n]} \big(\bigvee_{A\cap B\cap C\cap D=\{i\}} Z^{\mbox{ }l}_{A,B,C,D}\big)
\]
Fix an arbitrary 4-tuple $(A,B,C,D),$ $A\cap B\cap C\cap D=\{i\}$. For any $H\in {{n}\choose {2}}$, $H\not \ni i$ one of the sets $A,B,C,D$ is disjoint from $H$. Hence by modus ponens (cut) with $E=H$ and $F\in \{A,B,C,D\}$ with $H\cap F = \emptyset$
\[
Ant_{2,n},Z^{\mbox{ }l}_{A,B,C,D},\bigwedge_{\stackrel{E,F\in {{n}\choose {2}}}{E\cap F=\emptyset}}(\overline{X_{E,l}}\OR\overline{X_{F,l}})\vdash \overline{X_{H,l}}
\]
By repeatedly introducing ANDs in the conclusion, then OR in the antecedent  
\[
Ant_{2.n},\bigvee_{\stackrel{A,B,C,D\subseteq [n]}{A\cap B\cap C\cap D=\{i\}}} Z^{\mbox{ }l}_{A,B,C,D},\bigwedge_{\stackrel{E,F\in {{n}\choose {2}}}{E\cap F=\emptyset}}(\overline{X_{E,l}}\OR\overline{X_{F,l}})\vdash \bigwedge_{\stackrel{H\in {{n}\choose {2}}}{H\not \ni i}}\overline{X_{H,l}}
\]
By repeated introduction of ORs in both the antecedent and  the conclusion 
\[
Ant_{2,n},\bigvee_{i\in [n]}\big(\bigvee_{\stackrel{A,B,C,D\subseteq [n]}{A\cap B\cap C\cap D=\{i\}}} Z^{\mbox{ }l}_{A,B,C,D} \big) ,\bigwedge_{\stackrel{E,F\in {{n}\choose {2}}}{E\cap F=\emptyset}}(\overline{X_{E,l}}\OR\overline{X_{F,l}})\vdash \bigvee_{i\in [n]}\big(\bigwedge_{\stackrel{H\in {{n}\choose {2}}}{H\not \ni i}}\overline{X_{H,l}}\big)
\]
Taking into account~(\ref{foo}) and moving the third antecedent on the right-hand side we get the proof of Lemma~\ref{deriv:2}. 
\end{proof}\qed

\begin{definition}
Define for $i\in [n]$, $l \in [n-3]$ formula  
\begin{flalign*}
Special_{i,l}[(X_{S,l})_{S\in {{n}\choose {2}}}]\equiv \big[\big(Count[(X_{S,l})_{S\in {{n}\choose {2}}}]\geq 4\big)\big] \AND  \big[\big(\bigwedge_{\stackrel{B\in {{n}\choose{2}}}{B\not \ni i}} \overline{X_{B,l}}\big)\big]
\end{flalign*}
For $r\in [n-3]$ let 
$q_{r}$  be the number of indices $i\in [n]$ such that there is a color $l$, $1\leq l \leq r$ with  $Special_{i,l}[(X_{S,l})_{S\in {{n}\choose {2}}}]=TRUE$. 
\end{definition} 

\begin{remark}
Semantically we have $q_{r}=s_{r}$ (in $q_{r}$ we do not require that the intersection of all sets $B\in {{n}\choose{2}}\cap c^{-1}(l)$ have cardinality {\em exactly one}, but that is true if $Count[(X_{S,l})_{S\in {{n}\choose {2}}}]\geq 4$)
\end{remark}

Given $r\leq n-3$ we can compute, using a Frege proof, the binary representation of $q_{r}$.  
as $
q_{r} = Count( \{i\in [n]\mbox{ } |\mbox{ } \bigvee\limits_{l=1}^{r} Special_{i,l}\}).$
Now define for $0\leq r\leq n-3$
\begin{align*}
& M_{r}=|\{A\in {{n}\choose {2}}\mbox{:} \bigvee_{1\leq l\leq r}X_{A,l} \}|\mbox{ } \big(\mbox{ semantically }=\sum_{i=1}^{r} |c^{-1}(i)|\big)\\
& M_{r}^{(1)}=|\{A\in {{n}\choose {2}}:\bigvee_{1\leq l\leq r} (X_{A,l}\AND [Count(X_{S,l})\leq 3])\mbox{ }\}|
\\
& M_{r}^{(2)}= |\{A\in {{n}\choose {2}}:\bigvee_{1\leq l\leq r} (X_{A,l}\AND [Count((X_{S,l})_{S\in {{n}\choose {2}}})\geq 4])\mbox{ }\}| \\
& Q_{r}^{(1)}=|\{l\mbox{ }|\mbox{ }(1\leq l\leq r)\AND  [Count(X_{S,l})\leq 3]\mbox{ }\}|,
\end{align*}

One can easily prove in LK  the following 
\begin{lemma} $
Ant_{2,n}\AND Onto_{2,n}\vdash \big[M_{r}=M_{r}^{(1)}+M_{r}^{(2)}\big].
$
\label{clm:mr}
\end{lemma}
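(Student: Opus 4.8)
The plan is to show that, under the hypotheses, the collection of sets counted by $M_{r}$ is split, without overlap, into the collections counted by $M_{r}^{(1)}$ and $M_{r}^{(2)}$, and then to invoke an additivity property of Buss's counting formulas. For $A\in{n\choose 2}$ put
$\phi_{A}:=\bigvee_{1\leq l\leq r}X_{A,l}$,
$\psi_{A}:=\bigvee_{1\leq l\leq r}\big(X_{A,l}\AND[Count(X_{S,l})\leq 3]\big)$,
$\chi_{A}:=\bigvee_{1\leq l\leq r}\big(X_{A,l}\AND[Count((X_{S,l})_{S\in{n\choose 2}})\geq 4]\big)$,
so that by definition $M_{r}=Count((\phi_{A})_{A})$, $M_{r}^{(1)}=Count((\psi_{A})_{A})$ and $M_{r}^{(2)}=Count((\chi_{A})_{A})$. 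It suffices to give polynomial-size $LK$ derivations of, for every $A$, the two facts (i) $\phi_{A}\vdash\psi_{A}\OR\chi_{A}$ together with $\psi_{A}\OR\chi_{A}\vdash\phi_{A}$, and (ii) $Ant_{2,n}\AND Onto_{2,n}\vdash\neg(\psi_{A}\AND\chi_{A})$.

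For (i), the implication $\psi_{A}\OR\chi_{A}\vdash\phi_{A}$ is immediate by $\AND$-elimination and $\OR$-introduction. For the converse, recall that $Count(X_{S,l})$ denotes the bit encoding of a natural number and that Buss's arithmetic on bit-encoded integers provides polynomial-size $LK$ proofs of the trichotomy $\vdash[Count(X_{S,l})\leq 3]\OR[Count(X_{S,l})\geq 4]$; conjoining with $X_{A,l}$ and distributing $\AND$ over $\OR$ gives $X_{A,l}\vdash\big(X_{A,l}\AND[Count(X_{S,l})\leq 3]\big)\OR\big(X_{A,l}\AND[Count(X_{S,l})\geq 4]\big)$, and repeated $\OR$-introduction over $1\leq l\leq r$ yields $\phi_{A}\vdash\psi_{A}\OR\chi_{A}$. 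Since there are only $r\leq n-3$ colors and $O({n\choose 2})$ sets, the whole construction is polynomial. For (ii), expand $\psi_{A}\AND\chi_{A}$ as the disjunction over pairs $(l,l')$, $1\leq l,l'\leq r$, of $X_{A,l}\AND[Count(X_{S,l})\leq 3]\AND X_{A,l'}\AND[Count((X_{S,l'})_{S})\geq 4]$; when $l\neq l'$ the conjunct $X_{A,l}\AND X_{A,l'}$ is ruled out by $Onto_{2,n}$ (no set receives two colors), and when $l=l'$ the conjunct $[Count(X_{S,l})\leq 3]\AND[Count(X_{S,l})\geq 4]$ is refutable in polynomial-size $LK$, again by Buss's integer arithmetic. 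Each disjunct being refutable, $Ant_{2,n}\AND Onto_{2,n}\vdash\neg(\psi_{A}\AND\chi_{A})$ follows (in fact $Ant_{2,n}$ is not needed here, but is carried along harmlessly).

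Finally one applies the following additivity/inclusion–exclusion property of $Count$, one of the "simple intentional properties" recorded above and provable by induction on the size of the index set in the style of Lemma~\ref{frege-count}: given a side context $\Gamma$ and families $(\phi_{A})_{A},(\psi_{A})_{A},(\chi_{A})_{A}$ over a common index set, if $LK$ has polynomial-size proofs of $\Gamma,\phi_{A}\vdash\psi_{A}\OR\chi_{A}$, of $\psi_{A}\OR\chi_{A}\vdash\phi_{A}$, and of $\Gamma\vdash\neg(\psi_{A}\AND\chi_{A})$ for every $A$, then $LK$ has a polynomial-size proof of $\Gamma\vdash\big[Count((\phi_{A})_{A})=Count((\psi_{A})_{A})+Count((\chi_{A})_{A})\big]$. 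Instantiating this with $\Gamma=Ant_{2,n}\AND Onto_{2,n}$ and the three families above gives exactly $Ant_{2,n}\AND Onto_{2,n}\vdash[M_{r}=M_{r}^{(1)}+M_{r}^{(2)}]$.

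The step I expect to be the real obstacle — and the one whose details the paper defers to the journal version — is the additivity lemma for $Count$: one must check that the inductive derivation of $Count((\phi_{A})_{A})=Count((\psi_{A})_{A})+Count((\chi_{A})_{A})$ goes through in $LK$ with only polynomial blow-up, uniformly in $\Gamma$, and that all the auxiliary bit-level facts about $\leq$, $\geq$ and $+$ on $Count$-encodings invoked along the way (the trichotomy $[x\leq 3]\OR[x\geq 4]$ and the refutability of $[x\leq 3]\AND[x\geq 4]$) are themselves polynomially provable. Everything else is routine bookkeeping: distributing $\AND$ over $\OR$ and performing repeated $\OR$-introductions over the $O({n\choose 2}\cdot r)$ disjuncts involved.
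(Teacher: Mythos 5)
Your proof is correct and takes essentially the only natural route: the paper does not actually write out a derivation for Lemma~\ref{clm:mr} (it simply asserts ``one can easily prove in LK the following''), so you are filling in a gap rather than departing from the authors' argument. Your decomposition into $\phi_{A}\leftrightarrow(\psi_{A}\OR\chi_{A})$ plus disjointness of $\psi_{A},\chi_{A}$, followed by an additivity property of $Count$, is exactly what the phrase ``easily prove in LK'' is eliding, and your observation that $Ant_{2,n}$ is not actually used (only $Onto_{2,n}$, to prevent a set from being double-counted under two colors) is accurate.

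The one thing worth flagging is that the additivity lemma you invoke --- if the $\phi_{A}$ are provably the disjoint union of $\psi_{A}$ and $\chi_{A}$ then $Count(\phi)=Count(\psi)+Count(\chi)$ --- is not among the four items of Lemma~\ref{frege-count}, so you are right to present it as an additional ``intentional property'' in the same spirit rather than as a direct citation. It is standard in Buss's framework (it can be proved by the usual divide-and-conquer induction on the index set), but since the paper itself defers the formal arguments for Lemma~\ref{frege-count} to a journal version, your honesty about where the real LK bookkeeping lives --- the additivity lemma plus the bit-level trichotomy and refutability facts for $\leq$ and $\geq$ on $Count$-encodings --- is appropriate and matches the paper's own level of rigor at this point.
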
 

\begin{lemma}
One can compute in LK the binary expansions of $M_{r}^{(1)}$, $Q_{r}^{(1)}$ and prove that 
$
Ant_{2,n}\AND Onto_{2,n}\vdash \big[M_{r}^{(1)}\leq 3\cdot Q_{r}^{(1)}\big]
.$
\end{lemma}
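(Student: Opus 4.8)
The plan is to express $M_r^{(1)}$ as a single application of Buss's counting operator to a two-dimensional family of polynomial-size formulas, split that count along one coordinate, bound each slice by three times the corresponding bit of $Q_r^{(1)}$, and add up the bounds.

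Concretely, for $A\in{{n}\choose{2}}$ and $1\le l\le r$ put
\[
\beta_l\ :=\ \bigl[\,Count[(X_{S,l})_{S\in{{n}\choose{2}}}]\le 3\,\bigr],\qquad
Z_{A,l}\ :=\ X_{A,l}\AND\beta_l .
\]
Both formulas have size polynomial in $n$ (a comparison of a $Count$-numeral with a constant, conjoined with a literal), so applying $Count$ to the explicit families $(\,\bigvee_{l\le r}Z_{A,l}\,)_{A\in{{n}\choose{2}}}$ and $(\beta_l)_{1\le l\le r}$ yields, in LK, the binary expansions of $M_r^{(1)}$ and of $Q_r^{(1)}$ respectively. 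Since $Ant_{2,n}\AND Onto_{2,n}$ forces every $A$ to receive at most one colour and $Z_{A,l}\vdash X_{A,l}$, for each fixed $A$ at most one $Z_{A,l}$ with $l\le r$ is true; the grouping property of $Count$ (the true entries of a family with at most one per row are in bijection with the rows containing one) then gives
\[
Ant_{2,n}\AND Onto_{2,n}\ \vdash\ M_r^{(1)}\ =\ Count\bigl[(Z_{A,l})_{A\in{{n}\choose{2}},\,1\le l\le r}\bigr].
\]

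Next I would decompose this count along its second coordinate --- a mild extension of item~3 of Lemma~\ref{frege-count} --- to obtain, by $O(r)=O(n)$ iterated binary additions,
\[
Count\bigl[(Z_{A,l})_{A,l}\bigr]\ =\ \sum_{l=1}^{r}Count\bigl[(Z_{A,l})_{A\in{{n}\choose{2}}}\bigr].
\]
For each fixed $l\le r$ I case-split LK on the single subformula $\beta_l$. If $\beta_l$ is assumed, then $Z_{A,l}\vdash X_{A,l}$ and monotonicity of $Count$ (item~4 of Lemma~\ref{frege-count}) give $Count[(Z_{A,l})_A]\le Count[(X_{A,l})_A]$, while $\beta_l$ itself bounds the right-hand side by $3$. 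If $\neg\beta_l$ is assumed, then $Z_{A,l}$ is refutable for every $A$, so $Count[(Z_{A,l})_A]=0$. Cutting on $\beta_l\OR\neg\beta_l$, in both branches
\[
Count\bigl[(Z_{A,l})_{A}\bigr]\ \le\ 3\cdot\chi_{\beta_l},
\]
where $\chi_{\beta_l}$ denotes the numeral equal to $1$ if $\beta_l$ holds and $0$ otherwise. Summing these $r$ inequalities (monotonicity of binary addition) and using the scaling identity $\sum_{l\le r}3\chi_{\beta_l}=3\cdot\sum_{l\le r}\chi_{\beta_l}=3\cdot Count[(\beta_l)_{1\le l\le r}]=3\cdot Q_r^{(1)}$ gives $M_r^{(1)}\le 3\cdot Q_r^{(1)}$, as required.

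The only genuinely delicate part is the bookkeeping for $Count$: one needs the per-row grouping identity, the per-column additivity identity, monotonicity, and multiplication by a fixed small constant, each realised as a polynomial-size LK derivation, together with the fact that chaining $O(n)$ copies of the additivity and monotonicity steps does not blow up the proof length. These are exactly the elementary properties of Buss's counting formulas that the paper defers to the journal version; granting them, every remaining step is either a propositional tautology over a fixed finite set of variables ($Z_{A,l}\vdash X_{A,l}$, the case split on $\beta_l$, the collapse of $Z_{A,l}$ under $\neg\beta_l$) or a routine manipulation of $O(\log n)$-bit numerals, so for each fixed $r\le n-3$ the whole derivation from $Ant_{2,n}\AND Onto_{2,n}$ has size polynomial in $n$.
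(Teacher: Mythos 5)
Your argument is correct and matches the paper's proof in substance. Your $\beta_l$ and $Z_{A,l}$ are a propositional rendering of the paper's conditional quantities $W_l$ and $Y_l$: $\chi_{\beta_l}$ is exactly $W_l$, and $Count[(Z_{A,l})_A]$ is exactly $Y_l$ (equal to $Count[(X_{S,l})_S]$ when $\beta_l$ holds, and $0$ otherwise). Your case split on $\beta_l$ is the paper's per-column bound $Y_l\leq 3W_l$, and your grouping/additivity step is the paper's appeal to ``the cardinal of a union of disjoint sets is the sum of cardinals,'' with $Onto_{2,n}$ supplying disjointness; summing then gives $M_r^{(1)}\leq 3Q_r^{(1)}$ in both accounts. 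The only difference is one of exposition: you spell out the reduction of $M_r^{(1)}$ to the two-dimensional count and the LK case split, where the paper states these as immediate.
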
 
\begin{proof}
For the first part we use Buss's counting approach. For the second, define 
\[
W_{l} = \left\{ \begin{array}{ll}
1 & \mbox{if $Count(X_{S,l})\leq 3,$}\\
0 & \mbox{ otherwise}.\end{array} \right. \mbox{ 

and }
Y_{l} = \left\{ \begin{array}{ll}
Count(X_{S,l}) & \mbox{if $Count(X_{S,l})\leq 3,$}\\
0 & \mbox{ otherwise}.\end{array} \right.
\] 
Then (one can readily prove in LK that) $Y_{l}\leq 3W_{l}$. Summing up we get $M_{r}^{(1)}\leq 3Q_{r}^{(1)}$. The proof (using the fact that the cardinal of a union of disjoint sets is the sum of cardinals of individual subsets) can easily be simulated in LK. 
\end{proof}\qed

\begin{definition} Let 
\[
P_{r}^{(2)}=|\{A\in {{n}\choose {2}}: \bigvee_{1\leq l\leq r} X_{A,l}\AND \big[\big(\bigwedge_{B\not \ni First(A)} \overline{X_{B,l}}\big) \oplus \big(\bigwedge_{B\not \ni Second(A)} \overline{X_{B,l}}\big)\big] \}|.
\]
where 
\begin{itemize}
\item $First(A)$ is the smallest element in $A$, $Second(A)$ is the largest. 
\item $P\oplus Q$ in the above expression is a shorthand for $(P\AND \overline{Q})\OR(Q\AND \overline{P})$.  Since there are $O(n^2)$ sets $B$ to consider, the size of the formula after expanding to CNF is $O(n^4)$.
\end{itemize}
\end{definition}

\begin{lemma} One can prove in LK that 
\begin{flalign*}
Ant_{2,n}\AND Onto_{2,n}\AND \neg Cons_{2,n}\vdash [M_{r}^{(2)}& \leq  P_{r}^{(2)}].
\end{flalign*}
\end{lemma}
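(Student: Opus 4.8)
The plan is to reduce the inequality to a pointwise implication between the ``indicator conditions'' counted by $M_r^{(2)}$ and by $P_r^{(2)}$, and then close the argument with the monotonicity of $Count$ (Lemma~\ref{frege-count}, item~4). Fix $A\in\binom{n}{2}$ and write $a=First(A)$, $b=Second(A)$. It suffices to produce, for every color $l$ with $1\le l\le r$, a polynomial-size $LK$ derivation of
\[
Ant_{2,n},\ \neg Cons_{2,n},\ X_{A,l},\ \big[Count((X_{S,l})_{S\in\binom{n}{2}})\ge 4\big]\ \vdash\ \Big(\bigwedge_{B\not\ni a}\overline{X_{B,l}}\Big)\oplus\Big(\bigwedge_{B\not\ni b}\overline{X_{B,l}}\Big).
\]
Granting this, conjoin $X_{A,l}$ back into the succedent, take $\bigvee_{l\le r}$ on both the hypothesis side and the conclusion side, and observe that the left side is exactly the $A$-th disjunct in the definition of $M_r^{(2)}$ while the right side is the $A$-th disjunct in the definition of $P_r^{(2)}$. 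Feeding these $\binom{n}{2}$ implications, one per $A$, into Lemma~\ref{frege-count}(4) with the two tuples of indicator formulas then yields $Ant_{2,n}\wedge Onto_{2,n}\wedge\neg Cons_{2,n}\vdash[M_r^{(2)}\le P_r^{(2)}]$. (The hypothesis $Onto_{2,n}$ is carried along from the enclosing argument but does not seem to be needed in this particular step.)

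The displayed sequent is the propositional shadow of Lemma~\ref{claim1}, and I would prove it in two halves. \emph{At least one conjunction holds.} Instantiate Lemma~\ref{deriv:2} to obtain $Ant_{2,n}\vdash Int_{n,l}$; the antecedent $\neg Cons_{2,n}$ refutes the first disjunct $\bigvee_{D\cap E=\emptyset}(X_{D,l}\wedge X_{E,l})$ of $Int_{n,l}$, and $[Count\ge 4]$ (which is $LK$-provably equivalent to $\neg[Count\le 3]$) refutes the second, leaving $\bigvee_{i\in[n]}\bigwedge_{i\notin S}\overline{X_{S,l}}$; for every $i\notin A$ the $i$-th disjunct of this contains the literal $\overline{X_{A,l}}$ and is therefore closed against the hypothesis $X_{A,l}$, so the only surviving disjuncts are $i=a$ and $i=b$, which gives $(\bigwedge_{B\not\ni a}\overline{X_{B,l}})\vee(\bigwedge_{B\not\ni b}\overline{X_{B,l}})$. \emph{Not both hold.} If both conjunctions were true, then for every $S\ne A$ at least one of $a,b$ lies outside $S$ (otherwise $S\supseteq\{a,b\}$ forces $S=A$), so $\overline{X_{S,l}}$ is derivable for all $S\ne A$; comparing the tuple $(X_{S,l})_S$ with the tuple that is a tautology at position $A$ and a contradiction elsewhere, Lemma~\ref{frege-count}(4) yields $Count((X_{S,l})_S)\le 1$, contradicting the hypothesis $[Count\ge 4]$. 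Finally, combining ``$P\vee Q$'' with ``$\neg(P\wedge Q)$'' and unfolding the shorthand $P\oplus Q\equiv(P\wedge\overline Q)\vee(Q\wedge\overline P)$ is a uniform, constant-complexity propositional step that delivers the sequent.

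For the size bound: each application of Lemma~\ref{deriv:2} is polynomial, there are only $r\le n-3$ colors and $O(n^2)$ sets $A$, and each of the $O(n^3)$ sub-derivations above is polynomial, so the whole proof is polynomial. I expect the main obstacle to be the ``not both'' half: one must convert the meta-level observation ``$A$ is the unique set of color $l$'' into an explicit $LK$ proof of $\overline{X_{S,l}}$ for each of the $\Theta(n^2)$ sets $S\ne A$ and then route this through the counting lemma to reach the numerical clash $4\le Count((X_{S,l})_S)\le 1$ — all while keeping the derivation of polynomial size and making sure the $n$-way case split on the index $i$ inside $Int_{n,l}$ (all but two of whose branches close immediately) is handled cleanly.
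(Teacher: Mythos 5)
Your argument matches the paper's: both invoke Lemma~\ref{deriv:2} to obtain $\bigvee_i Special_{i,l}$ under $X_{A,l}\wedge[Count\ge 4]\wedge\neg Cons_{2,n}$, discard the branches $i\notin A$ using $X_{A,l}$, establish the exclusive-or, and close via the monotonicity clause of Lemma~\ref{frege-count}. Your ``not both'' step (all $S\ne A$ miss one of $a,b$, hence $Count\le 1$, contradicting $Count\ge 4$) is a slightly cleaner route than the paper's, which instead exhibits a witness $B\neq A$ with $X_{B,l}$ and uses $\neg Cons_{2,n}$ to see that $B$ meets $A$ in exactly one point; and your observation that $Onto_{2,n}$ is not actually needed for this particular inequality is correct.
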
 

\begin{proof} 
The inequality follows in the following way: From Lemma~\ref{deriv:2}
\begin{align*}
& Ant_{2,n}\vdash Int_{n,l}\mbox{, hence }\\
& Ant_{2,n} \AND \neg Cons_{2,n} \AND X_{A,l}\AND [Count((X_{S,l})_{S\in {{n}\choose {2}}})\geq 4]\vdash  \bigvee_{i\in [n]} Special_{i,l}
\end{align*}

Now assume $X_{A,l}\AND [Count((X_{S,l})_{S\in {{n}\choose {2}}})\geq 4])$. 
For $i\neq First(A), Second(A)$ set $A$ is among the $B$'s in the conjunction defining $Special_{i,l}$, so all these  formulas evaluate to FALSE. 
Furthermore, if $X_{A,l}$ and $Count ((X_{S,l})_{S\in {{n}\choose {2}}})\geq 4]$ then exactly one of the two remaining terms, $\bigwedge\limits_{B\not \ni First(A)} \overline{X_{B,l}}$ and $\bigwedge\limits_{B\not \ni Second(A)} \overline{X_{B,l}}$ also simplifies to FALSE. 
Indeed, there is a set $B\neq A$ with $X_{B,l}$. $B$ does not contain one of $First(A),Second(A)$, hence $\overline{X_{B,l}}$ appears in exactly one of the corresponding conjunctions, 
making it FALSE.  

Hence every set $A$ counted by $M_{r}^{(2)}$ is among those counted by $P_{r}^{(2)}$ and, by $Onto_{2,n}$, only in one such set. 


\end{proof}\qed


Define $U_{r} =  |\{A\in {{n}\choose {2}}: \big(\bigvee\limits_{\lambda=1}^{r} Special_{First(A),\lambda}\big) \AND \big(\bigvee\limits_{\nu=1}^{r} Special_{Second(A),\nu} \big) \}|.$

\begin{lemma}
We have (and can prove in polynomial size in LK)
\begin{flalign*}
& Ant_{n,2}\vdash  \big[U_{r}= |\{(i,j): i<j\in [n]\mbox{ and } \big(\bigvee\limits_{\lambda=1}^{r} Special_{i,\lambda}\big)\AND  \\ 
& \AND \big( \bigvee\limits_{\nu=1}^{r}Special_{j,\nu}\big)\}|={{q_{r}}\choose {2}}.\big] 
\end{flalign*}
\end{lemma}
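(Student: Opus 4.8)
The statement has two parts: first, that $U_r$ equals the number of pairs $(i,j)$ with $i<j$ both ``eventually special'' (i.e. satisfying $\bigvee_{\lambda=1}^{r}Special_{i,\lambda}$), and second, that this count equals $\binom{q_r}{2}$. The plan is to handle them in this order.

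For the first equality, the key observation is that a set $A\in\binom{n}{2}$ is a $2$-element subset $\{First(A),Second(A)\}$ with $First(A)<Second(A)$, so there is a bijection between sets $A\in\binom{n}{2}$ and pairs $(i,j)$ with $i<j\in[n]$, given by $A\mapsto(First(A),Second(A))$. Under this bijection, the defining predicate of $U_r$, namely $\big(\bigvee_{\lambda=1}^{r}Special_{First(A),\lambda}\big)\wedge\big(\bigvee_{\nu=1}^{r}Special_{Second(A),\nu}\big)$, corresponds exactly to the predicate $\big(\bigvee_{\lambda=1}^{r}Special_{i,\lambda}\big)\wedge\big(\bigvee_{\nu=1}^{r}Special_{j,\nu}\big)$. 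In LK this amounts to noting that the two $Count$ expressions are literally indexing over the same collection of objects under a fixed renaming of indices, so the formulas whose true instances are being counted are identical up to reindexing; hence the binary expansions produced by Buss's $Count$ machinery agree, and $Ant_{n,2}\vdash[U_r=|\{(i,j):i<j, \ldots\}|]$ follows. No use of $Ant_{n,2}$ beyond well-formedness is really needed here; it is carried along for uniformity with the surrounding lemmas.

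For the second equality, recall from the earlier definition that $q_r=Count(\{i\in[n] \mid \bigvee_{l=1}^{r}Special_{i,l}\})$ is precisely the number of indices $i\in[n]$ that are eventually special. The pairs counted above are exactly the $2$-element subsets of this set of eventually-special indices, so their number is $\binom{q_r}{2}$. Propositionally, this is an instance of item 2 of Lemma~\ref{frege-count}: setting $X_i:=\bigvee_{l=1}^{r}Special_{i,l}$ for $i\in[n]$, that lemma gives a polynomial-size LK proof of
\[
\vdash Count_{\binom{n}{2}}[X_i\wedge X_j]_{i<j}=\binom{Count_n[X_1,\ldots,X_n]}{2}=\binom{q_r}{2}.
\]
Chaining this with the first equality (and observing $Count_{\binom{n}{2}}[X_i\wedge X_j]_{i<j}$ is, after the same reindexing, the $Count$ expression defining the middle quantity) yields the claimed sequent.

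The main obstacle is purely bookkeeping: making the index-matching between ``sets $A\in\binom{n}{2}$'' and ``ordered pairs $i<j$'' completely precise so that the two $Count$ formulas are syntactically the same object (up to the harmless reordering that $Count$ tolerates by Lemma~\ref{frege-count}), and checking that instantiating Lemma~\ref{frege-count}(2) with the compound formulas $X_i:=\bigvee_{l=1}^{r}Special_{i,l}$ (rather than atomic variables) is legitimate — which it is, since the lemma is stated for arbitrary logical variables and substitution of formulas for variables is sound in LK, only increasing proof size polynomially. As with the other steps of the propositional simulation, the resulting proof is of polynomial length.
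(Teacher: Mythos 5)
Your proposal is correct and follows essentially the same route as the paper: the paper also disposes of the first equality as a semantic reinterpretation (the bijection $A\mapsto(First(A),Second(A))$) and derives the second from Lemma~\ref{frege-count}(2) instantiated at $X_i:=\bigvee_{l=1}^r Special_{i,l}$. Your additional remark that substituting compound formulas for the variables in Lemma~\ref{frege-count}(2) is harmless and only polynomially increases proof size is a worthwhile clarification, but it does not constitute a different approach.
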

\begin{proof}
The first equality amounts to no more than semantic reinterpretation. The last equality follows from Lemma~\ref{frege-count} (2).
\end{proof}\qed

\begin{lemma} $
Ant_{2,n}\AND Onto_{2,n}\vdash \big[ U_{r}+P_{r}^{(2)}\leq q_{r}\cdot (n-1)\big]$ has poly-size LK proofs. 
\end{lemma}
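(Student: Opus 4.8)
The plan is to prove the inequality $U_{r}+P_{r}^{(2)}\leq q_{r}\cdot(n-1)$ by giving a ``pairs-counting'' argument that is amenable to Buss-style counting in $LK$. The quantity $q_{r}(n-1)$ counts ordered pairs $(i,j)$ with $i$ special (at some color $\leq r$) and $j\neq i$ arbitrary; equivalently it is the disjoint sum, over special indices $i$, of the $n-1$ pairs $(i,j)$. I would first set up, in $LK$, a map from the sets counted by $U_{r}$ and those counted by $P_{r}^{(2)}$ into this collection of ordered pairs, and then argue the map is injective and its image is covered, so that the cardinality inequality follows from Lemma~\ref{frege-count}(4) (monotonicity of $Count$ under $\leq$ on the indicator variables).

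Concretely, the key steps in order would be: (1) Observe that every $A\in{{n}\choose{2}}$ counted by $P_{r}^{(2)}$ has \emph{exactly one} special endpoint — this is precisely what the $\oplus$ in the definition of $P_{r}^{(2)}$ encodes, together with the fact (already proved in the previous lemma) that such an $A$ contributes to $M_{r}^{(2)}$ and hence has one endpoint witnessing $Special_{\cdot,l}$ while the other does not. Map such an $A=\{i,j\}$, with $i$ its special endpoint, to the ordered pair $(i,j)$. (2) Every $A=\{i,j\}$ counted by $U_{r}$ has \emph{both} endpoints special; map it to the ordered pair $(\min(i,j),\max(i,j))$ — or, to get the count exactly right, note $U_{r}={{q_{r}}\choose{2}}$ by the preceding lemma and instead map each such unordered pair to the lexicographically-first ordered version. (3) Check these two maps have disjoint images: a pair coming from $P_{r}^{(2)}$ has a special first coordinate and a \emph{non-special} second coordinate, whereas a pair coming from $U_{r}$ has both coordinates special. (4) Check injectivity of each map separately — for $P_{r}^{(2)}$ this uses $Onto_{2,n}$ exactly as in the previous lemma (a set is counted at most once), and for $U_{r}$ it is immediate since distinct $2$-element sets give distinct sorted pairs. (5) Assemble: the total number of ordered pairs $(i,j)$ with $i$ special, $i\neq j$, is at most $q_{r}(n-1)$ (in fact equal, but $\leq$ suffices), and since $U_{r}$ and $P_{r}^{(2)}$ inject into disjoint subsets of this set, $U_{r}+P_{r}^{(2)}\leq q_{r}(n-1)$.

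To formalize in $LK$ I would, following the paper's conventions, introduce indicator variables for membership of each ordered pair $(i,j)$ in the relevant collection and express $q_{r}(n-1)$, $U_{r}$, $P_{r}^{(2)}$ as $Count$ of suitable Boolean combinations of the $Special_{i,l}$ and $X_{A,l}$ formulas; the injectivity/disjointness facts then become pointwise implications between these indicator formulas, provable in polynomial-size $LK$, and the final inequality follows by Lemma~\ref{frege-count}(4) applied to the combined indicator vector. The arithmetic identity ``number of ordered pairs with first coordinate in a $q_{r}$-element set $= q_{r}(n-1)$'' is essentially Lemma~\ref{frege-count}(3) (with a trivial adjustment for the restriction $i\neq j$), so no new counting primitive is needed.

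The main obstacle I expect is step (1): pinning down, in a way that survives propositional formalization, that a set counted by $P_{r}^{(2)}$ has exactly one special endpoint and correctly matching this with the $\oplus$ in its definition — in particular making sure the $LK$ derivation that ``$X_{A,l}$ together with $Count\geq 4$ and $\neg Cons$ forces exactly one of the two big conjunctions in the $\oplus$ to be false'' (already sketched for the previous lemma) is reused verbatim and that the bookkeeping of ``special at some color $\leq r$'' versus ``special at the specific color $l$ witnessing $X_{A,l}$'' is handled without a blow-up. Everything else is routine $Count$-manipulation of the kind the paper has already been doing.
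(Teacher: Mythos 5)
Your proposal takes essentially the same route as the paper. The paper also reduces to a pairs-counting argument: it sets up indicator variables $X_{i,j}$ (for $i<j$) meaning ``at least one of $i,j$ is special,'' bounds $U_r+P_r^{(2)}\leq Count[(X_{i,j})]$, and then injects those into the ordered-pair indicators $Y_{i,j}=[\bigvee_\lambda Special_{i,\lambda}]\cdot\delta_{\{i\neq j\}}$ whose count is $q_r(n-1)$ by Lemma~\ref{frege-count}(3), using the monotonicity clause of Lemma~\ref{frege-count} (with the case split $X_{i,j}\leq Y_{i,j}$ or $X_{i,j}\leq Y_{j,i}$ depending on which endpoint is special — exactly your ``map to the ordered pair with special first coordinate, sorting when both are special''). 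Your only genuine divergence is presentational: you phrase the argument as two explicitly disjoint injections, one for $U_r$-sets and one for $P_r^{(2)}$-sets, where the paper phrases it as a single pointwise inequality on indicator vectors; these are the same proof. You correctly single out the delicate point — that the $\oplus$ in $P_r^{(2)}$ is a condition \emph{at the color $l$ witnessing $X_{A,l}$}, not a global ``special at some color $\leq r$'' statement — and the paper's proof text indeed elides this bookkeeping (it asserts without argument that $P_r^{(2)}$-sets have exactly one globally special endpoint). So your flagging of this as ``the main obstacle'' is apt and, if anything, more careful than the paper's own two-line justification.
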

\begin{proof} 

$U_{r}$ counts sets $\{i,j\}$ such that {\em both} $i$ and $j$ are special. $P_{r}^{(2)}$ counts sets $A$ for which exactly one of $First(A), Second(A)$ is special for the unique $l$ (by $Ant_{2,n}\AND Onto_{2,n}$) such that $X_{A,l}$. Therefore 
\[
P_{r}^{2}\leq |\{(i,j): i<j\in [n]\mbox{ and } \big[ \bigvee_{1\leq \lambda\leq r} Special_{i,\lambda}\big] \oplus \big[\bigvee_{1\leq \nu\leq r} Special_{j,\nu}\big]\}|
\]

Let $X_{i,j}=1$ if $i<j\in [n]$ and both $i$ and $j$ are special or if $i<j\in [n]$ and exactly one of $i,j$ is special. Then $U_{r}+P_{r}^{2}\leq Count[(X_{i,j})]$. 

The right-hand side is, by Lemma~\ref{frege-count} (4), equal to 
\[
|\{ (i,j): i\neq j\in[n]\mbox{ and } \bigvee_{1\leq \lambda\leq r} Special_{i,\lambda} \}|=Count[(Y_{i,j})]
\]
where $Y_{i,j}= [\bigvee_{1\leq \lambda\leq r} Special_{i,\lambda}] \cdot \delta_{\{i\neq j\}}$. Now 
 $X_{i,j}\leq Y_{i,j}$ if both $i,j$ are special or $i$ is but $j$ isn't,  
$X_{i,j}\leq Y_{j,i}$ if $j$ is special but $i$ isn't, and we apply Lemma~\ref{frege-count} (5). \end{proof}\qed

\begin{corollary} 
LK can efficiently prove formulas: 
\begin{flalign*}
& (1).\mbox{ } Ant_{2,n}\AND Onto_{2,n}\AND \neg Cons_{2,n}\vdash [M_{r}+U_{r}\leq q_{r}(n-1)+3\cdot Q_{r}^{1}].\\
& (2).\mbox{ } Ant_{2,n}\AND Onto_{2,n}\vdash [M_{n-3}={{n}\choose {2}}].\\
& (3).\mbox{ } Ant_{2,n}\AND Onto_{2,n}\vdash [q_{n-3}\leq n-3]\\
& (4).\mbox{ } Ant_{2,n}\AND Onto_{2,n}\vdash [q_{n-3}(n-1)+3\cdot Q_{n-3}^{1}+{{n-3}\choose {2}}\leq \\ 
& \leq (n-3)\cdot (n-1)+U_{n-3}]\\
\end{flalign*}
\end{corollary}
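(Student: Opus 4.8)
The plan is to assemble the four inequalities of the corollary directly from the lemmas already proved, treating each as a short LK derivation from facts that are in hand. For part (1), I would start from the previous lemma $Ant_{2,n}\AND Onto_{2,n}\vdash [M_r^{(2)}\leq P_r^{(2)}]$ and from $Ant_{2,n}\AND Onto_{2,n}\vdash [M_r^{(1)}\leq 3\cdot Q_r^{(1)}]$, add the last lemma $Ant_{2,n}\AND Onto_{2,n}\vdash [U_r+P_r^{(2)}\leq q_r(n-1)]$, and combine with Lemma~\ref{clm:mr}, which gives $M_r=M_r^{(1)}+M_r^{(2)}$. Chaining: $M_r+U_r = M_r^{(1)}+M_r^{(2)}+U_r \leq 3Q_r^{(1)} + (M_r^{(2)}+U_r) \leq 3Q_r^{(1)} + P_r^{(2)} + U_r \leq 3Q_r^{(1)} + q_r(n-1)$. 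Each of these steps is an application of the monotonicity of addition together with a cut against the relevant lemma, which is a routine polynomial-size LK manipulation on the binary encodings (addition and comparison of $O(\log n)$-bit numbers given by $Count$ formulas, as in Buss's treatment of $PHP$); I would note this rather than expand it. The role of $\neg Cons_{2,n}$ is exactly to license the use of the $M_r^{(2)}\leq P_r^{(2)}$ lemma, whose proof invoked it.

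For part (2), I would observe that $M_{n-3}$ counts the sets $A\in {{n}\choose{2}}$ for which $\bigvee_{1\leq l\leq n-3} X_{A,l}$ holds; under $Ant_{2,n}$ every $A$ satisfies $\bigvee_{l=1}^{n-3} X_{A,l}$, so $M_{n-3}=\left|{{n}\choose{2}}\right|$, and Lemma~\ref{frege-count}(1) (the ``a conjunction of trues is counted correctly'' fact) lets LK prove $M_{n-3}={{n}\choose{2}}$ from $Ant_{2,n}$ alone; $Onto_{2,n}$ is included only for uniformity of statement. Part (3) is the propositional version of the semantic bound $s_r\leq p_r\leq r$ specialised to $r=n-3$: since $q_r=Count(\{i\in[n]\mid \bigvee_{l=1}^{r}Special_{i,l}\})$ and each witnessing $i$ picks out (at least) one color $l\leq r$ with $Count(X_{S,l})\geq 4$, while distinct colors have disjoint color classes under $Onto_{2,n}$, a counting argument (again Lemma~\ref{frege-count}(4)--(5) on the map $i\mapsto$ one such $l$) gives $q_{n-3}\leq n-3$ in polynomial-size LK.

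Part (4) is the purely arithmetic inequality $q(n-1)+3Q+{{n-3}\choose{2}}\leq (n-3)(n-1)+{{q}\choose{2}}$ with the substitutions $q=q_{n-3}$, $Q=Q_{n-3}^{(1)}$, instantiating the semantic chain $N_{n-3}\leq {{n}\choose{2}}-3$ from the lemma bounding $N_{n-3}$. I would derive it by first noting $Q_{n-3}^{(1)}\leq n-3-q_{n-3}$ (a color counted by $Q^{(1)}$ has $Count\leq 3$, hence is not one of the $\geq 4$-colors used to witness specialness, and there are $n-3$ colors total — this is the propositional form of ``$r-p_r$ counts the small classes'' combined with $q_r\leq p_r$), and then observing that $q(n-1)-{{q}\choose{2}} = (n-1)+(n-2)+\cdots+(n-q)$ is an increasing telescoping sum bounded termwise, exactly as in the proof of the $N_{n-3}$ lemma; the remaining slack $3(n-3-q)\leq 3+4+\cdots+(n-q-1)$ closes the gap. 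In LK this is a bounded computation on binary numbers of logarithmic length, so it has polynomial-size proofs by the standard Buss machinery. Finally, putting (1)--(4) together with the monotonicity-of-$\leq$ cuts gives $M_{n-3}+U_{n-3}\leq q_{n-3}(n-1)+3Q_{n-3}^{(1)}+U_{n-3}$ (dropping $U_{n-3}$ is not even needed — it appears on both sides of the chain formed from (1) and (4)) and then $M_{n-3}+U_{n-3}\leq (n-3)(n-1)+U_{n-3}+{{n-3}\choose{2}}-{{n-3}\choose{2}}$, i.e. $M_{n-3}\leq (n-3)(n-1)$, contradicting $M_{n-3}={{n}\choose{2}}$ from (2) once one checks ${{n}\choose{2}}>(n-3)(n-1)$ for $n\geq 5$; this contradiction, derived under $Ant_{2,n}\AND Onto_{2,n}\AND\neg Cons_{2,n}$, yields $Kneser^{onto}_{2,n}$.

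The main obstacle I anticipate is not any single inequality but the bookkeeping of the $U_r$ term through parts (1) and (4): $U_r$ has to be introduced on the left in (1) and carried along so that it cancels against the $U_{n-3}$ on the right of (4), and one must be careful that the lemma $U_r+P_r^{(2)}\leq q_r(n-1)$ is the only place $U_r$'s definition is used, so that the cancellation is legitimate and does not require re-proving anything about $U_r$. Beyond that, the only genuinely delicate point is checking that every comparison and sum of $Count$-formulas invoked above is over numbers of magnitude $\mathrm{poly}(n)$, hence of bit-length $O(\log n)$, so that the LK derivations of the arithmetic facts have size $\mathrm{poly}(n)$ — this is exactly the regime handled in \cite{buss-frege-php}, and I would cite it rather than redo it.
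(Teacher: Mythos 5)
Your derivations of (1)--(4) match the paper's proof: part (1) by chaining the lemmas $M_r = M_r^{(1)}+M_r^{(2)}$, $M_r^{(1)}\leq 3Q_r^{(1)}$, $M_r^{(2)}\leq P_r^{(2)}$ and $U_r+P_r^{(2)}\leq q_r(n-1)$; part (2) by $Ant_{2,n}$ plus Lemma~\ref{frege-count}(1); part (3) by the injectivity of the map from special indices to their witnessing colors; part (4) via $Q_{n-3}^{(1)}\leq (n-3)-q_{n-3}$ and the arithmetic on $q(n-1)-\binom{q}{2}$. That is exactly the paper's route (the paper writes the algebra in part (4) by direct factoring rather than telescoping, a cosmetic difference). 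For part (3), note that the real reason a single $l$ cannot be special for two distinct indices $i\neq j$ is that $Special_{i,l}\wedge Special_{j,l}$ would force every $S$ with $X_{S,l}$ to contain both $i$ and $j$, hence $Count(X_{S,l})\leq 1$, contradicting the $\geq 4$ clause in $Special$; ``disjoint color classes under $Onto_{2,n}$'' by itself is not the reason, so you should spell that out.

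There is, however, an arithmetic slip in your closing paragraph (which, strictly speaking, is the text following the Corollary in the paper rather than the Corollary itself). Substituting (4) into (1) gives $M_{n-3}+\binom{n-3}{2}\leq (n-3)(n-1)$, and together with $M_{n-3}=\binom{n}{2}$ this yields $\binom{n}{2}+\binom{n-3}{2}\leq (n-1)(n-3)$, i.e.\ $2n^2-8n+12\leq 2n^2-8n+6$, a genuine contradiction for all $n$. You instead drop the $\binom{n-3}{2}$ term and conclude $M_{n-3}\leq (n-3)(n-1)$, then claim $\binom{n}{2}>(n-3)(n-1)$ for $n\geq 5$. That inequality is false for $n\geq 6$ (at $n=6$ both sides equal $15$; for $n\geq 7$ the inequality reverses, since $\binom{n}{2}-(n-3)(n-1)=-\tfrac{(n-1)(n-6)}{2}$). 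The weakened bound therefore does not produce a contradiction; you must keep the $\binom{n-3}{2}$ that (4) supplies.
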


\begin{proof} 
The conclusions can be derived from the antecedent $Ant_{2,n}\AND Onto_{2,n}\AND \neg Cons_{2,n}$ by simulating the following arguments: 
\begin{flalign*} 
& (1).\mbox{ } M_{r}+U_{r}=(M_{r}^{1}+M_{r}^{2})+U_{r}\leq 3Q_{r}^{(1)}+P_{r}^{(2)}+U_{r}\leq q_{r}(n-1)+3Q_{r}^{(1)}. 
\\ 
& (2).\mbox{ } \mbox{follows by applying Lemma~\ref{frege-count}(1)}\\
& (3).\mbox{ } \mbox{follows from the formula defining }q_{r}\\
& (4).\mbox{ } 3\cdot Q_{n-3}^{1}=3\cdot (n-3-p_{n-3})\leq 3\cdot (n-3-q_{n-3}) \mbox{ as }q_{n-3}\leq p_{n-3}\\
& \mbox{
(since there may be more than one color class sharing the same special element).}\\
&\mbox{ Now } R.H.S. - L.H.S. \geq (n-3)(n-1)-q_{n-3}(n-1)-3(n-3-q_{n-3})-\\ & - \frac{(n-3)^2-(n-3)}{2}+
\frac{q_{n-3}^2-q_{n-3}}{2} \geq (n-3-q_{n-3})(n-4)-\\ & - \frac{(n-3-q_{n-3})(n-3+q_{n-3}-1)}{2}
 = \frac{(n-3-q_{n-3})(n-4+q_{n-3})}{2}\geq 0.
\end{flalign*} 
\end{proof} \qed 
Now we can put everything together to prove Theorem~\ref{ub} (a): by (1) 
\[
Ant_{2,n}\AND Onto_{2,n}\AND \neg Cons_{2,n} \vdash [M_{n-3}+U_{n-3}\leq q_{n-3}(n-1)+3\cdot Q_{n-3}^{1}].
\]
Adding relation (4), taking into account (2) and simplifying by $U_{n-3}+3Q_{n-3}^{1}$ we get 
\[
Ant_{n,2}\AND Onto_{n,2}\AND \neg Cons_{2,n}\vdash {{n}\choose {2}}+{{n-3}\choose {2}}\leq (n-1)(n-3),
\]
or, equivalently 
\[
Ant_{n,2}\AND Onto_{n,2}\AND \neg Cons_{2,n}\vdash [2n^2-8n+12\leq 2n^2-8n+6]\vdash \square
\]
Moving $\neg Cons_{2,n}$ to the other side we get the desired result. 
\qed

\subsection{\bf Case $k=3:$ }

A claim similar to Lemma~\ref{claim1} holds for $k=3$ (for a proof that can be efficiently simulated in EF (in fact Frege) see the Appendix): 

\begin{lemma}\cite{garey-johnson-coloring}
For any $1\leq \lambda \leq n-5$ at least one of the following is true: 
\begin{itemize} 
\item $c^{-1}(\lambda)$ contains two disjoint sets
\item $|c^{-1}(\lambda)|\leq 3n-8$, or 
\item there exists $x\in \cap_{A\in c^{-1}(\lambda)} A$. 
\end{itemize} 
\label{claim:k3} 
\end{lemma}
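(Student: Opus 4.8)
The plan is to mimic, for $k=3$, the ad hoc case analysis that established Lemma~\ref{claim1} for $k=2$, but now one must track how two triples can overlap. Fix a color class $\mathcal{F}=c^{-1}(\lambda)\subseteq\binom{n}{3}$ and assume it contains no two disjoint sets and that $\bigcap_{A\in\mathcal F}A=\emptyset$; the goal is to bound $|\mathcal F|$ by $3n-8$. Since the total intersection is empty, I would first pick $A=\{a,b,c\}\in\mathcal F$ and, for each of $a,b,c$, a witness set in $\mathcal F$ avoiding it; these together with $A$ generate enough structure to force every member of $\mathcal F$ to lie in a small ``sunflower-like'' family built on the ground set $A\cup(\text{three witnesses})$, which has bounded size. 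Concretely, because no two sets in $\mathcal F$ are disjoint, every $B\in\mathcal F$ meets $A$, so $B$ contains one of $a,b,c$; partitioning $\mathcal F$ according to which of $a,b,c$ it contains (choosing one canonically) reduces the problem to bounding, for a fixed element $x$, the number of triples in $\mathcal F$ through $x$ that pairwise intersect and have no common refinement beyond $x$ — and here the pairwise-intersecting condition on the two-element ``tails'' $B\setminus\{x\}$, together with emptiness of the global intersection, is exactly the $k=2$ situation, giving $O(n)$ such sets per element and hence $|\mathcal F|=O(n)$; the precise constant $3n-8$ comes from a careful bookkeeping of the boundary cases (when tails share an element, when $x$ itself is forced, etc.), exactly as in \cite{garey-johnson-coloring}.

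The three alternatives are genuinely exhaustive: if $\mathcal F$ has two disjoint sets we are in the first case; otherwise, if additionally $\bigcap_{A\in\mathcal F}A\neq\emptyset$ we are in the third case; and if neither holds, the counting above yields $|\mathcal F|\le 3n-8$, the second case. So the logical skeleton of the proof is a two-level dichotomy, and the only real content is the cardinality estimate under the double negative assumption.

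For the propositional simulation (deferred to the Appendix, as the statement notes), I would express each of the three alternatives as in Lemma~\ref{deriv:2}: a disjunction $\bigvee_{A\cap B=\emptyset}(X_{A,\lambda}\wedge X_{B,\lambda})$ for the first, a $Count$-inequality $Count[(X_{S,\lambda})]\le 3n-8$ for the second, and $\bigvee_{x\in[n]}\bigwedge_{x\notin S}\overline{X_{S,\lambda}}$ for the third, and then derive the resulting sequent $Ant_{3,n}\vdash Int^{(3)}_{n,\lambda}$ by a bounded case analysis over constantly many sets at a time (the analogue of the four-set meta-lemma used for $k=2$), so that only polynomially many ground instances are manipulated and the $Count$ formulas are compared using Lemma~\ref{frege-count}.

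The main obstacle I expect is the combinatorial case analysis itself: for $k=3$ the ``local'' configuration one must analyze is no longer four $2$-sets on $5$ points but a family of $3$-sets whose pairwise intersection pattern has many more types (tails sharing one point, sets sharing two points, the pivot element varying), so verifying exhaustiveness and pinning down the exact bound $3n-8$ rather than a crude $O(n)$ requires patient enumeration — this is presumably why \cite{garey-johnson-coloring} is invoked and why $k=3$ only yields extended Frege (the subsequent renaming of one color and one point, as discussed before the proof of Theorem~\ref{ub}, is what costs the extension variables), whereas the cleaner counting argument available for $k=2$ stayed within Frege.
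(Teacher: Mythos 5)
Your high-level skeleton (two-level dichotomy; if neither disjointness nor a global common element holds, count) is correct, and your closing remark about why $k=3$ only yields extended Frege — the renaming induction, not this lemma — matches the paper. But the core of your cardinality estimate has a genuine flaw.

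You reduce, for a fixed pivot $x\in\{a,b,c\}$, to ``the $k=2$ situation'' on the two-element tails $B\setminus\{x\}$, invoking a pairwise-intersection condition on those tails. That condition does not hold: two triples $\{x,p,q\}$ and $\{x,r,s\}$ in $\mathcal F$ already intersect at $x$, so nothing forces their tails $\{p,q\}$ and $\{r,s\}$ to meet. The family of tails through a fixed pivot is therefore \emph{not} an intersecting family of $2$-sets, and Lemma~\ref{claim1} gives you no bound on it. Moreover, even the version of the $k=2$ lemma you would want says that an intersecting family of $2$-sets with empty total intersection has size at most $3$ — a constant, not $O(n)$ — so the ``$O(n)$ per element'' step cannot be coming from there even in the cases where the tails happen to intersect.

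The paper's (Garey--Johnson) proof is structurally different at exactly this point. One does not bound, per pivot, a single family of tails. Instead one chooses $\{a,b,c\}\in c^{-1}(\lambda)$ greedily (first $a$ maximizing the number of sets through it, then $b$, then $c$) so that the four families
$A=\{W: a\in W,\,b\notin W\}$, $B=\{W: b\in W,\,a\notin W\}$, $C=\{W: c\in W,\,a,b\notin W\}$, $D=\{W: a,b\in W\}$
partition $c^{-1}(\lambda)$ with $|A|\ge|B|\ge|C|$. The bounds $|A|+|B|\le 2n-6$ (Lemma~\ref{ab}) and $|C|+|D|\le n-2$ (Lemma~\ref{cd}) come from \emph{cross}-intersections between these families — every $W\in A$ must meet every $W'\in B$ in an element other than $a,b$, and similarly for $C$ against $D$ — via a six-case analysis on the internal structure of $B$ (resp.\ $C$). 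Summing gives $|c^{-1}(\lambda)|\le 3n-8$. Your sketch as written would need to be replaced by this cross-family argument; the ``three witnesses'' and the tail-reduction do not supply it.
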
  

Assuming this claim we settle the case $k=3$. The argument we give is simpler than the argument in \cite{garey-johnson-coloring}, and has the advantage of being easily/efficiently simulated in  EF, similar to the case of PHP. Full details are deferred to the journal version.  

\begin{lemma}\cite{garey-johnson-coloring}
Kneser's conjecture is true for $k=3$. 
\end{lemma}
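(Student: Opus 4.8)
The plan is to deduce Kneser's conjecture for $k=3$ from Lemma~\ref{claim:k3} by a counting argument entirely analogous to the one just carried out for $k=2$, only with the constant $3$ (the bound on "small" color classes when $k=2$) replaced by $3n-8$ (the corresponding bound for $k=3$), and with the number of available colors being $n-5$ rather than $n-3$. First I would suppose, for contradiction, that $c:\binom{n}{3}\to[n-5]$ is a proper coloring of $KG_{n,3}$, i.e.\ no color class contains two disjoint sets. Then for every $\lambda\in[n-5]$ the first alternative of Lemma~\ref{claim:k3} fails, so each class $c^{-1}(\lambda)$ is either \emph{small} ($|c^{-1}(\lambda)|\le 3n-8$) or \emph{pointed} (all its members share a common element $x_\lambda$).

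Next I would set up the bookkeeping. Let $P$ be the set of colors that are pointed and have $|c^{-1}(\lambda)|\ge 3n-7$ (so that the pointed structure is the only option), say $|P|=p$, and let $S$ be the set of elements of $[n]$ that serve as the common point of some such class; as in the $k=2$ case $|S|=:s\le p$, since distinct pointed classes may share a common point. Every 3-set in a pointed class with common point $x$ contains $x$; the number of 3-sets containing at least one element of $S$ is at most $\binom{n-1}{2}s - \binom{s}{2}\binom{n-2}{1}+\cdots$ — more cleanly, it is $\binom{n}{3}-\binom{n-s}{3}$, which is maximized over $s\le p$ at $s=p$. The remaining $n-5-p$ colors are small and contribute at most $(3n-8)(n-5-p)$ sets. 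Since the color classes partition all of $\binom{n}{3}$, this yields the inequality
\[
\binom{n}{3}\le \binom{n}{3}-\binom{n-p}{3}+(3n-8)(n-5-p),
\]
i.e.\ $\binom{n-p}{3}\le (3n-8)(n-5-p)$, and the contradiction will come from showing this fails for all $0\le p\le n-5$ once $n$ is large enough (small $n$ being checked directly, or absorbed into the base case of the theorem).

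The monotonicity/extremal estimate is the technical heart: I would verify $\binom{n-p}{3}>(3n-8)(n-5-p)$ for $0\le p\le n-5$ by the same device used for $N_{n-3}$ above, namely writing $\binom{n-p}{3}$ as a sum of consecutive "triangular-type" terms and observing that replacing one unit of $n-5-p$ by the next available summand only increases the left side faster than $3n-8$; equivalently one checks the worst case $p=n-5$ (where the right side is $0$ and the left side is $\binom{5}{3}=10>0$) and shows the difference is decreasing in $n-5-p$ after one rewriting. For the propositional simulation in EF I would mirror the $k=2$ translation verbatim: introduce Buss-style counting formulas for the analogues of $M_r, M_r^{(1)}, M_r^{(2)}, Q_r^{(1)}, q_r, P_r^{(2)}, U_r$ (now indexed over $\binom{n}{3}$ and over the $\le 3$-element "firsts" $A_{\le 2}$ or, more precisely, over which coordinate of $A$ is the special one), use Lemma~\ref{frege-count} to handle the $\binom{q_r}{2}$ and the products with $n-1$, and use the EF-simulable version of Lemma~\ref{claim:k3} proved in the Appendix in place of Lemma~\ref{deriv:2}. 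The single place where $k=3$ genuinely costs more than $k=2$, and the reason the simulation lands in extended Frege rather than Frege, is that the "pointed" alternative for $k=3$ does not pin down the common element to a bounded-size set of candidates as cleanly as $First(A),Second(A)$ did for pairs, so defining the binary value $q_r$ of the number of special elements requires introducing extension variables naming the common point of each pointed class; I expect this to be the main obstacle, and it is exactly the obstacle flagged in the remark after Theorem~\ref{ub} that the $k=2$ counting trick "apparently does not work for $k=3$ as well." Routine but lengthy inequality-chasing, as in the $k=2$ corollary, completes the derivation of $\square$ from $Ant_{3,n}\AND Onto_{3,n}\AND \neg Cons_{3,n}$, and moving $\neg Cons_{3,n}$ across gives Theorem~\ref{ub}(b).
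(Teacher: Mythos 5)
Your plan is precisely the counting strategy that the paper explicitly rules out at the end of Section~\ref{sec:23}. The crux of your argument is the claim that ${{n-p}\choose{3}}>(3n-8)(n-5-p)$ holds for all $0\le p\le n-5$, and you propose to verify it by checking $p=n-5$ and appealing to monotonicity. But the inequality is false: take $p=n-6$, so ${{n-p}\choose{3}}={{6}\choose{3}}=20$ while $(3n-8)(n-5-p)=3n-8$, which exceeds $20$ for $n\ge 10$. The difference is not monotone either; it is positive at $p=0$ (left side cubic, right side quadratic in $n$) and at $p=n-5$, but dips negative in between (e.g.\ at $p=n-6,n-7$). This is exactly the obstruction the paper spells out in the paragraph beginning ``Although some of this program can be carried through, this approach does not seem to work'': unlike for $k=2$, the small-class bound $3n-8$ grows with $n$, so $N^{(3)}_{n-5}$ can exceed ${{n}\choose{3}}$ and the contradiction evaporates.

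The paper's actual proof of this lemma is a different, disjunctive argument: an induction on $n$ with base case $n=7$. Given an $(n-5)$-coloring of $KG_{n,3}$ with no monochromatic disjoint pair, Lemma~\ref{claim:k3} says each class is pointed or small. If \emph{some} class is pointed at $x$, delete $x$ and that color to obtain an $(n-6)$-coloring of $KG_{n-1,3}$, contradicting the inductive hypothesis. Otherwise \emph{every} class is small, and the single inequality ${{n}\choose{3}}\le(n-5)(3n-8)$ gives the contradiction directly. Because the cases are handled separately, one never has to bound a mixed sum of pointed and small classes, which is where your additive bookkeeping fails. This inductive deletion is also the true source of the jump to extended Frege (renaming after removing an element and a color, exactly as in Buss's PHP treatment), not the issue you raise about naming the common point of a pointed class.
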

\begin{proof} 
By induction. The base case $n=7$ can be verified directly. 
Assume that one could give a coloring $c$ of the Kneser graph $KG_{n,3}$ with $n-5$ colors. If there is a color $\lambda$ with 
$x\in \cap_{A\in c^{-1}(l)} A$ then one could eliminate both element $x$ and color $\lambda$, obtaining a $n-6$ coloring of graph $KG_{n-1,3}$, thus contradicting the inductive hypothesis. 

If no color class contains two disjoint sets then all of them satisfy $|c^{-1}(l)|\leq 3n-8$. But then we 
would have $
{{n}\choose {3}}\leq (n-5)(3n-8)$. This is false for $n\geq 7$. 
\end{proof}\qed

We could try to give a Frege proof of the case $k=3$ based on counting principles, using the following strategy, similar to the one used in case $k=2$: 
\begin{enumerate} 
\item Count, using a Frege proof, the number $p_{r}$ of sets $c^{-1}(l)$, $1\leq l \leq t$ such that $|c^{-1}(\lambda)|\geq 3n-7$  [implicitly $\cap_{A\in c^{-1}(l)} A\neq \emptyset$]
\item  Define $
M_{r}^{(3)} = \sum_{i=1}^{r} |c^{-1}(i)|$ and 
\[
N_{r}^{(3)}={{n-1}\choose {2}}+{{n-2}\choose {2}}+\ldots + {{n-p_{r}}\choose {2}}+ (n-5-p_{r})(3n-7)
\]
\item Show inductively that $M_{r}^{(3)} \leq  N_{r}^{(3)}$.  
\item Obtain a contradiction from $M_{n-5}^{(3)}={{n}\choose {3}}$ and $N_{n-3}^{(3)}< {{n}\choose {3}}$. 
\end{enumerate} 

Although some of this program can be carried through, {\bf this approach does not seem to work.}  The inequality that critically fails is the last one: 
when $k=2$  we showed that $N_{n-3}^{(2)}< {{n}\choose {2}}$ as the maximum of the upper bound was obtained for $p_{r}=n-3$. For $k=3$, though, such a statement is not true. Indeed, since 

\begin{enumerate} 
\item we need to give upper bound estimates on the size of $n-5$ color classes. 
\item $3n-8$, the bound on the size of independent sets is growing with $n$
\end{enumerate} 

\noindent one cannot guarantee that $N_{n-3}^{(3)}< {{n}\choose {3}}$ for all possible values of $p_{r}$. For instance, if  $p_{r}=n-6$ (an event we cannot exclude) the resulting upper bound, 
${{n-1}\choose {2}}+{{n-2}\choose {2}}+\ldots + {{6}\choose {2}}+(3n-8)={{n}\choose {3}}-10-6-3-1+(3n-7)={{n}\choose {3}}+(3n-27)$ 
is not smaller than ${{n}\choose {3}}$ for $n\geq 10$. For this reason when $k=3$ we will have to do with the {\em extended Frege} proof described above. 

Lemma~\ref{claim:k3} can be efficiently simulated in EF (actually in Frege) via a straightforward but tedious adaptation of the argument in \cite{garey-johnson-coloring} (see the Appendix).  

On the other hand it may still be possible (and we conjecture that this can be done) to obtain a Frege proof by a more refined version of the above counting approach: rather than just counting "large" color classes (those with cardinality at least $3n-7$) we could try to make a finer distinction (based on the structure of color classes 
displayed by the proof of Lemma~\ref{claim:k3}) to obtain tighter upper bounds for $M_{r}^{(3)}$.

\qed

\section{Heads up: the general case of the Kneser-Lov\'{a}sz Theorem}

In this section we briefly announce the other results on the proof complexity of the Kneser-Lov\'{a}sz Theorem presented in a companion paper \cite{istrate-kneser-2}. Unlike the cases $k=2,3$, cases $k\geq 4$ apparently require proof systems more powerful than EF. Indeed, the general case of the Kneser-Lov\'{a}sz theorem follows by a combinatorial result known as the {\em octahedral Tucker lemma} \cite{matousek-book}. The propositional counterpart of this implication is the existence of a variable substitution that transforms the propositional encoding of the octahedral Tucker lemma into the Kneser-Lov\'{a}sz formulae. 

Though the formalization of the octahedral Tucker lemma  yields a formula of exponential size, the octahedral Tucker lemma admits \cite{istrate-kneser-2} a (nonstandard) version leading to polynomial-size formulas that is sufficient to prove the Kneser-Lov\'{a}sz theorem. However, even this version seems to require exponentially long EF proofs. The reason is that we prove the Octahedral Tucker Lemma by reduction to a Tseitin formula, crucially, though, to one on a complete graph $K_{m}$ of {\bf exponential size} ($m=O(n!\cdot 2^{n})$). 

The (exponentially long) proofs of these exponential Tseitin formulas can be generated implicitly \cite{krajivcek2004implicit}. However, not only the proof steps but the very {\em formulas} involved in the proof may have exponential size and need to be generated implicitly. Implicit proofs with implicitly generated formulas have been previously considered in the literature \cite{krajicek2004diagonalization}. We postpone the discussion of further technical details to \cite{istrate-kneser-2}.

\section{Conclusions, open problems and acknowledgments}

Our work has introduced a new class of propositional formulas to investigate with respect to complexity, and raises several open questions:  
\begin{enumerate} 
\item Does $Kneser_{2,n}$ have polynomial size cutting plane proofs/OBDD with projection,  as PHP does \cite{cook1987complexity,chen2009direct}? 
 \item Does family $Kneser_{3,n}$ have polynomial size Frege proofs  ? 
\item Is  family $Kneser_{k,n}$, for $k\geq 4$, hard for Frege/EF proofs ? 
\item There is a large and reasonably sophisticated literature dealing with extensions of the Kneser-Lov\'{a}sz Theorem (see e.g. \cite{kozlov-book}) or other results in combinatorial topology \cite{matousek-book,de2012course}.  Further 
investigate such results from the standpoint of bounded reverse mathematics. 
\end{enumerate}

 This work has been supported by CNCS IDEI Grant PN-II-ID-PCE-2011-3-0981 "Structure and computational difficulty in combinatorial optimization: an interdisciplinary approach".

\bibliographystyle{alpha}
\bibliography{/home/gistrate/Dropbox/texmf/bibtex/bib/bibtheory}

\section*{Appendix}

\subsection{(Extended) Frege proof of Claim~\ref{claim:k3}}

\begin{proof}

The following (semantical) argument is just a rewriting of the original proof of Lemma 1 from the Appendix of \cite{garey-johnson-coloring}. It is included in detail to make the paper self-contained and support the claim that this argument could be simulated by Frege proofs. 

Assume that $c^{-1}(\lambda)\neq \emptyset$. Let $\{a,b,c\}\in c^{-1}(\lambda)$. Define: 
\[
A=\{W \in c^{-1}(\lambda): a\in W, b\not \in W\}, 
B=\{W \in c^{-1}(\lambda): a\not \in W, b \in W\}
\]
\[
C=\{W\in c^{-1}(\lambda): c\in W, a \not \in W, b \not\in W\}, 
D=\{W \in c^{-1}(\lambda): a \in W, b \in W\}, 
\]
\begin{lemma}
$c^{-1}(\lambda)$ contains two disjoint sets, or families $A,B,C,D$ partition $c^{-1}(\lambda)$.  
\end{lemma}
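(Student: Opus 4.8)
The plan is to show that, assuming $c^{-1}(\lambda)$ contains no two disjoint sets, every set $W\in c^{-1}(\lambda)$ lands in exactly one of the four families $A,B,C,D$. The key observation driving everything is that $\{a,b,c\}\in c^{-1}(\lambda)$ and no member of $c^{-1}(\lambda)$ can be disjoint from $\{a,b,c\}$: hence every $W\in c^{-1}(\lambda)$ meets $\{a,b,c\}$, i.e. $W$ contains at least one of $a,b,c$.

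First I would verify \emph{coverage}: take an arbitrary $W\in c^{-1}(\lambda)$ and split on whether $a\in W$ and whether $b\in W$. If $a\in W$ and $b\notin W$, then $W\in A$; if $a\notin W$ and $b\in W$, then $W\in B$; if $a\in W$ and $b\in W$, then $W\in D$; and if $a\notin W$ and $b\notin W$, then since $W$ must meet $\{a,b,c\}$ we get $c\in W$, so $W\in C$. This exhausts all four Boolean combinations of the predicates $a\in W$, $b\in W$, so $A\cup B\cup C\cup D=c^{-1}(\lambda)$. Next I would check \emph{disjointness} of the families: $A,B,D$ are separated by the truth values of the two predicates $a\in W$ and $b\in W$ (they correspond to $(1,0)$, $(0,1)$, $(1,1)$ respectively), and $C$ is defined with the explicit conjunct $a\notin W \wedge b\notin W$, i.e. the $(0,0)$ case, so it is disjoint from the other three by construction. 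Thus the four families form a genuine partition.

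The remaining content of the statement is the disjunction in the lemma: either $c^{-1}(\lambda)$ contains two disjoint sets, or the partition property holds. This is immediate from the structure of the argument — the coverage step used \emph{only} that each $W$ meets $\{a,b,c\}$, which is exactly the negation of "there exist two disjoint sets in $c^{-1}(\lambda)$" restricted to the pair $(\{a,b,c\},W)$; so if coverage fails for some $W$, that $W$ is disjoint from $\{a,b,c\}\in c^{-1}(\lambda)$, giving the first alternative. One should also dispatch the trivial case $c^{-1}(\lambda)=\emptyset$ separately (the partition claim holds vacuously), which is why the ambient argument in the appendix assumes $c^{-1}(\lambda)\neq\emptyset$ before fixing $\{a,b,c\}$.

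I do not expect a real obstacle here; the lemma is essentially a case analysis on a three-element set. The only mild subtlety — and the point worth stating carefully for the later claim that the whole argument is Frege-simulable — is that the case split is over a \emph{constant} number of alternatives ($a\in W$?, $b\in W$?, $c\in W$?), so the propositional translation introduces only a bounded branching and the resulting derivation stays polynomial size. The subsequent (harder) work of bounding $|c^{-1}(\lambda)|$ by $3n-8$ when it has no two disjoint sets and no common element will then be carried out by estimating $|A|,|B|,|C|,|D|$ separately using this partition.
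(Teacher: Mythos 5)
Your argument is correct and is exactly the paper's (very terse) proof, just spelled out in full: both rest on the single observation that if $c^{-1}(\lambda)$ has no two disjoint members, then every $W\in c^{-1}(\lambda)$ meets $\{a,b,c\}$, after which the four families are read off from the Boolean combinations of $a\in W$, $b\in W$ (with $c\in W$ forced in the $(0,0)$ case). Your additional remark that the case split is over a bounded number of alternatives, and hence propositionally cheap, is a sensible thing to record for the later Frege-simulation claim.
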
 
\begin{proof} 
Disjointness is easy. The partitioning follows since $(a,b,c)\in c^{-1}(\lambda)$, hence every set in $c^{-1}(\lambda)$ must contain one of $a,b,c$. 
\end{proof} \qed
\begin{corollary}
$c^{-1}(\lambda)$ contains  disjoint sets or $|c^{-1}(\lambda)|=|A|+|B|+|C|+|D|$. 
\end{corollary}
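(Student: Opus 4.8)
The statement to prove is the Corollary just after the partitioning lemma: that $c^{-1}(\lambda)$ either contains two disjoint sets, or $|c^{-1}(\lambda)| = |A| + |B| + |C| + |D|$. The plan is simply to read off the consequence of the preceding lemma together with the obvious pairwise disjointness of the families $A,B,C,D$, so the corollary is almost immediate.

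\textbf{Key steps.} First, invoke the preceding lemma: either $c^{-1}(\lambda)$ contains two disjoint sets — in which case the first disjunct of the corollary holds and there is nothing further to do — or the families $A,B,C,D$ partition $c^{-1}(\lambda)$. Second, in the latter case, recall that a partition into parts means the parts are pairwise disjoint and their union is the whole; hence $c^{-1}(\lambda) = A \cup B \cup C \cup D$ with the union disjoint. Third, apply the (completely standard, and propositionally available via Lemma~\ref{frege-count}-style counting) fact that the cardinality of a disjoint union is the sum of the cardinalities of the parts, yielding $|c^{-1}(\lambda)| = |A| + |B| + |C| + |D|$. That establishes the second disjunct.

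\textbf{Main obstacle.} There is essentially no mathematical obstacle here: the only thing worth being careful about is verifying pairwise disjointness of $A,B,C,D$ from their defining membership conditions in $a$ and $b$ — $A$ forces $a\in W, b\notin W$; $B$ forces $a\notin W, b\in W$; $C$ forces $a\notin W, b\notin W$; $D$ forces $a\in W, b\in W$ — so the four membership patterns of $\{a,b\}$ against $W$ are mutually exclusive, and the "partition'' claim then reduces to the fact (already argued in the lemma) that every $W\in c^{-1}(\lambda)$ meets $\{a,b,c\}$, so a $W$ avoiding both $a$ and $b$ must contain $c$ and thus lands in $C$. The only real work in the surrounding development — which is not needed for this corollary itself — is the later casework bounding $|A|+|B|+|C|+|D|$ by $3n-8$ when no two sets in $c^{-1}(\lambda)$ are disjoint and no common element exists; but that is the content of subsequent lemmas, not of this one.
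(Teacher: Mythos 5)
Your proof is correct and matches the (implicit) argument the paper intends: the corollary follows immediately from the preceding partition lemma together with the fact that the cardinality of a disjoint union is the sum of the parts' cardinalities. The paper leaves the corollary unproved precisely because of this immediacy, and your reconstruction — including the remark on why $A,B,C,D$ are pairwise disjoint — is exactly the intended reading.
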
 

\begin{lemma} Assume $\{a,b,c\}$ is chosen so that $|A|\geq |B|\geq |C|$. Then at least one of the following alternatives holds: 

\begin{enumerate} 
\item $c^{-1}(\lambda)$ contains  disjoint sets,
\item  $\bigcap\limits_{W\in c^{-1}(\lambda)} W\neq \emptyset$, or 
\item $B\neq \emptyset$ and $|A|\leq (n-3)$ and $|A|+|B|\leq 2n - 6 $.
\end{enumerate} 

\label{ab}
\end{lemma}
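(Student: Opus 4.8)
The plan is to fix the triple $\{a,b,c\}\in c^{-1}(\lambda)$ achieving the maximum $|A|\geq|B|\geq|C|$ among all choices of base triple in $c^{-1}(\lambda)$, and then to argue by cases on which of alternatives (1) and (2) fail, aiming to derive (3). So assume throughout that $c^{-1}(\lambda)$ contains no two disjoint sets and that $\bigcap_{W\in c^{-1}(\lambda)}W=\emptyset$; the goal is to show $B\neq\emptyset$, $|A|\leq n-3$, and $|A|+|B|\leq 2n-6$.

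First I would dispose of the trivial pieces. If $B=\emptyset$ then (by the maximality ordering $|A|\geq|B|\geq|C|$) also $C=\emptyset$, so every set of $c^{-1}(\lambda)$ lies in $A\cup D$, i.e.\ contains $a$; that forces $a\in\bigcap_{W\in c^{-1}(\lambda)}W$, contradicting the assumed failure of (2). Hence $B\neq\emptyset$. For the bound $|A|\leq n-3$: every set counted by $A$ contains $a$ but not $b$, so it is determined by its two remaining elements, a $2$-subset of $[n]\setminus\{a,b\}$; but more is true — I would fix a witness $W_0\in B$ (nonempty by the previous step), note $W_0=\{b,y,z\}$ with $a\notin W_0$, and observe that every $W\in A$ must intersect $W_0$ (no disjoint pairs), hence contains $y$ or $z$. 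Since $W\ni a$ too, $W$ is $\{a,y,*\}$ or $\{a,z,*\}$ or $\{a,y,z\}$, and a short count of these possibilities over $*\in[n]\setminus\{a,b\}$ gives at most $2(n-3)-1 = 2n-7$ candidates — too weak on its own, so I would instead use a second witness. This is where the real work is.

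The main step, and the main obstacle, is the bound $|A|+|B|\leq 2n-6$. The idea is: pick $W_A\in A$ and $W_B\in B$ (both nonempty once we know $B\neq\emptyset$, and $A\neq\emptyset$ since $|A|\geq|B|\geq 1$); every set in $A$ meets $W_B$ and every set in $B$ meets $W_A$, and meanwhile sets in $A$ share $a$, sets in $B$ share $b$, and no set of $A\cup B$ contains both $a$ and $b$ (those are the $D$ sets). Writing $W_A=\{a,u,v\}$, $W_B=\{b,p,q\}$, each $W\in A$ contains $a$ and at least one of $p,q$; each $W\in B$ contains $b$ and at least one of $u,v$. I would then bound $|A|$ by the number of $3$-subsets of $[n]\setminus\{b\}$ of the form $\{a,p,\cdot\}$ or $\{a,q,\cdot\}$ and symmetrically $|B|$, carefully counting the overlaps $\{a,p,q\}$, $\{b,u,v\}$ and the interaction when $\{p,q\}\cap\{u,v\}\neq\emptyset$, and check that in every configuration the total is at most $2n-6$. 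The delicate part is handling the degenerate cases where these distinguished elements coincide (e.g.\ $u=p$), which tighten rather than loosen the count, and making sure the bookkeeping genuinely closes at $2n-6$ rather than $2n-5$; I expect the tight case corresponds to $W_A$ and $W_B$ being "spread out" and to use the no-disjoint-pair condition one more time on a third set.

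Finally, since everything above is a finite case analysis over membership relations among $O(1)$-many distinguished elements and cardinality comparisons of explicitly defined sets, each step translates directly into a polynomial-size (extended) Frege derivation: the set families $A,B,C,D$ are defined by constant-size Boolean combinations of the variables $X_{W,\lambda}$, their cardinalities are computed by Buss's $Count$ formulas, and the inequalities are instances of Lemma~\ref{frege-count}; the case splits become cuts on the relevant literals. I would record this simulation remark at the end rather than belabor it. The combinatorial heart is Lemma~\ref{ab}, and within it the $2n-6$ bound; once that is in hand, combining $|A|\geq|B|\geq|C|$ with $|c^{-1}(\lambda)|=|A|+|B|+|C|+|D|\leq |A|+|B|+|C|+({n-2\choose 1})$ and $|C|\leq|B|\leq n-3$ yields $|c^{-1}(\lambda)|\leq 3n-8$, which is alternative (3) of Lemma~\ref{claim:k3}.
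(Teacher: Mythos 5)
Your outline correctly disposes of the easy parts (that $B\neq\emptyset$ follows from the failure of alternative (2), since $B=\emptyset$ forces $C=\emptyset$ and hence $a\in\bigcap W$), and you correctly identify that $|A|+|B|\le 2n-6$ is where the content is. But that is precisely the step you do not actually prove, and the approach you sketch would not close it. Fixing one witness $W_A\in A$ and one $W_B\in B$ and counting which third elements are permitted gives roughly $|A|\le 2(n-3)$ and $|B|\le 2(n-3)$ (every $W\in A$ contains $a$ and one of $p,q$; symmetrically for $B$), so the sum is on the order of $4n$, not $2n-6$. You say you ``expect the tight case'' to require the no-disjoint-pair condition on a third set, and that the bookkeeping may not close at $2n-6$ --- that hedge is exactly where the proof is missing. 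The paper's argument (following Garey--Johnson) is not a two-witness count but a case analysis on the \emph{global combinatorial structure of the family $B$}: whether $\bigcap_{W\in B}W$ strictly contains $\{b\}$, whether there exist $W_1,W_2\in B$ with $W_1\cap W_2=\{b\}$, whether $|B|\in\{1,2\}$, or whether $B$ is a ``triangle'' $\{b,i,j\},\{b,j,k\},\{b,i,k\}$ with all pairwise intersections of size $2$ but trivial total intersection. In the case $|\bigcap_{W\in B}W|=2$, every set of $A$ must contain the second common element $i$ (else it is disjoint from some $W\in B$), so $|A|\le n-3$ and $|A|+|B|\le 2|A|\le 2n-6$; in the triangle case every set of $A$ contains two of $i,j,k$, so $|A|\le 3$; in the case with $W_1\cap W_2=\{b\}$, every set of $A$ meets both $W_1$ and $W_2$ away from $b$, so $|A|\le 4$; the small-$|B|$ cases are handled directly. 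None of these bounds falls out of a single pair of witnesses.

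A secondary problem is in your closing sentence: from $|A|+|B|\le 2n-6$, $|C|\le n-3$, and $|D|\le n-2$ you would only get $|c^{-1}(\lambda)|\le 4n-11$, not $3n-8$. The paper does not bound $|C|$ and $|D|$ separately; it proves the joint bound $|C|+|D|\le n-2$ (Lemma~\ref{cd}), by another case analysis on the structure of $C$, and only then does $(2n-6)+(n-2)=3n-8$ come out. So even if you had the $2n-6$ bound, the way you propose to combine it would not give Lemma~\ref{claim:k3}.
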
 
\begin{proof} 
A case  analysis: 
\begin{itemize} 

\item{\bf Case 1: $B= \emptyset$.} 

Then $C=\emptyset$ as well. Consequently $\bigcap\limits_{W\in c^{-1}(\lambda)} W\ni a.$

\item{\bf Case 2: there are sets $W_{1},W_{2}\in B$ with $W_{1}\cap W_{2}=\{b\}$ (implicitly $B\neq \emptyset$ ).} 

Then either $c^{-1}(\lambda)$ contains two disjoint sets or every set $W\in A$ must meet both $W_{1}$ and $W_{2}$ (in an element obviously different from $b$). There are at most 4 such sets $W$ (corresponding to the $2\time 2$ choices of elements from $W_{1},W_{2}$) hence $|B|\leq |A|\leq 4$ and $|A|+|B|\leq 8\leq 2n-6$ for $n\geq 7$. 
\item{\bf Case 3: $|B|=1$.}

Let $B=\{b,i,j\}$. Then either $c^{-1}(\lambda)$ contains two disjoint sets or every set $W\in A$ must contain either $i$ or $j$ but not $i$. There are at most $n-3$ sets of the first type and at most $n-4$ of the second, hence 
$|A|+|B|\leq (n-3)+(n-4)+1=2n-6$. 

\item{\bf Case 4: $|B|=2$ but for the two sets $W_{1},W_{2}\in B$ we have $W_{1}\cap W_{2}=\{b\}$.} 

Let $W_{1}=\{b,i,j\}, W_{2}=\{b,i,k\}$ with $i,j,k\neq a,b$. Then either $c^{-1}(\lambda)$ contains two disjoint sets or every set $W\in A$ must contain either $i$ or both $j$ and $k$. 

There are at most $n-3$ sets $\{a,i,l\}$, $l\neq a,b,i$ of the first type and one set, $\{a,j,k\}$, of the second. 
Hence $|A|+|B|\leq (n-3)+1+2=n\leq 2n-6$. 

\item{\bf Case 5: $|B|\geq 3$ and $|\bigcap\limits_{W\in B} W|=2$.} 

Let $\bigcap\limits_{W\in B} W =\{b,i\}$. Since $|B|\geq 3$ there exist distinct indices $j,k,l$ such that  
$\{b,i,j\},\{b,i,k\},\{b,i,l\} \in B$. 

If there is $W\in A$ that does not contain $i$ it follows that $W$ is disjoint from at least one of these. 

Otherwise all sets in $A$ contain $i$. There are at most $(n-3)$ such sets $\{a,i,r\}$, $r\neq i,b$. Hence 
$|A|+|B|\leq 2\cdot |A|\leq 2(n-3)= 2n-6$. 

\item{\bf Case 6: $|B|\geq 3$, $\bigcap\limits_{W\in B} W=\{b\}$ and for all $Z,T\in B$,  $|Z\cap T|\geq 2$.} 

Let $W_{1}=\{b,i,j\}$. Let $W_{2}\in B$, $i\not \in W_{2}$. $W_{2}$ exists by the second condition. By the third condition $j\in W_{2}$. By the same reason there exists $W_{3}\in B$, $i\in W_{3}, j\not \in W_{3}$. 

Let $W_{2}=\{b,j,k\}$, $W_{3}=\{b,i,l\}$. $k$ must be equal to $l$ so that $|W_{2}\cap W_{3}|\geq 2$.

By the third condition it follows that $B=\{W_{1},W_{2},W_{3}\}$. 

Now either $c^{-1}(\lambda)$ contains two disjoint sets or every set in $A$ must contain two of 
$i,j,k$. There are at most three such sets, so $|A|+|B|\leq 6\leq 2n-6$. 

\end{itemize} 
\end{proof}\qed 

\begin{lemma} Assume $\{a,b,c\}\in c^{-1}(\lambda)$ is chosen such that $|A|\geq |B|\geq |C|$. Then  [$c^{-1}(\lambda)$ contains  disjoint sets],  or [
$|C|+|D|\leq n - 2 $].
\label{cd}
\end{lemma}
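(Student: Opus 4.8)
The plan is to mimic the structure of the proof of Lemma~\ref{ab}: a case analysis driven by the "shape" of the family $B$ (together with $A$, which is at least as large), now carried out with the goal of bounding $|C|+|D|$ instead of $|A|+|B|$. Recall the setup: $\{a,b,c\}\in c^{-1}(\lambda)$ is chosen with $|A|\geq|B|\geq|C|$, where $A$ consists of the sets in $c^{-1}(\lambda)$ containing $a$ but not $b$, $B$ those containing $b$ but not $a$, $C$ those containing $c$ but neither $a$ nor $b$, and $D$ those containing both $a$ and $b$. So every $D\in D$ has the form $\{a,b,x\}$ for some $x$, which already gives $|D|\leq n-2$ outright; the work is therefore to control $|C|$, and in fact to show $|C|+|D|\leq n-2$ simultaneously, using the disjointness alternative whenever $C$ is large.

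First I would dispose of the trivial sub-cases: if $C=\emptyset$ then $|C|+|D|=|D|\leq n-2$ and we are done, so assume $C\neq\emptyset$; since $|B|\geq|C|$ this forces $B\neq\emptyset$ as well. Pick some $W_0=\{c,i,j\}\in C$ (with $i,j\notin\{a,b\}$, and not both equal to $c$). Any $D\in D$ is $\{a,b,x\}$; any $C'\in C$ avoids $a$ and $b$. Unless $c^{-1}(\lambda)$ has two disjoint sets, every $D=\{a,b,x\}$ must meet every $C'\in C$, and since $D\cap C'\subseteq\{x\}$ this means $x\in C'$ for all $C'\in C$; hence $x\in\bigcap_{C'\in C}C'$. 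So, writing $I=\bigcap_{C'\in C}C'$, either there are two disjoint sets, or every $D\in D$ has its third element in $I$, giving $|D|\leq|I|$. The case split is then on $|I|$ (equivalently, on the "spread" of $C$), exactly parallel to how Lemma~\ref{ab} split on the structure of $B$: if $|I|\geq 2$, say $I\supseteq\{c,i\}$, then every set in $C$ is $\{c,i,\ast\}$, so $|C|\leq n-2$ and moreover every $D=\{a,b,x\}$ has $x\in\{c,i\}$, so $|D|\leq 2$ and a small count (using $|C|\geq|D|$-type bookkeeping together with the forced common elements) yields $|C|+|D|\leq n-2$; if $|I|=\{c\}$ only, then $C$ contains sets $\{c,i,j\},\{c,i',j'\}$ with small pairwise intersections, and the requirement that each $D$ meet all of them pins $|D|$ down to a tiny constant while $|C|\leq n-2$ is still available after splitting off the disjointness alternative. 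The remaining intermediate configuration ($|C|=1$ or $2$) is handled by direct enumeration as in Cases~3--4 of Lemma~\ref{ab}.

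The main obstacle I anticipate is the bookkeeping in the case where $C$ is "concentrated" — $|I|=\{c,i\}$ with $C$ potentially of size up to $n-2$ — because there one cannot afford to also let $D$ be large, and the bound $|C|+|D|\leq n-2$ is tight, so one needs to argue that the elements used by the sets in $D$ are already "paid for" among the $n$ ground elements without double counting against the elements appearing as the third coordinate in $C$. Concretely, $C$-sets are $\{c,i,r\}$ ranging over $r\notin\{a,b,c,i\}$ (at most $n-4$ of them, plus possibly $\{c,i\}$ itself is not allowed since sets have size $3$, so $\leq n-3$), while $D$-sets $\{a,b,x\}$ need $x\in\{c,i\}$ (at most $2$); adding these and checking $\leq n-2$ for $n\geq 7$ is the delicate step, and it is where the choice $|A|\geq|B|\geq|C|$ (hence $|C|$ not too large relative to the whole class) may need to be invoked to close the gap. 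Once all cases are verified, the conclusion is exactly the dichotomy stated: either $c^{-1}(\lambda)$ contains two disjoint sets, or $|C|+|D|\leq n-2$. Combining this with Lemma~\ref{ab}, which bounds $|A|+|B|$, and the corollary $|c^{-1}(\lambda)|=|A|+|B|+|C|+|D|$ in the no-disjoint-sets case, then yields the overall bound $|c^{-1}(\lambda)|\leq (2n-6)+(n-2)=3n-8$ claimed in Lemma~\ref{claim:k3}.
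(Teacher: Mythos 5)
Your overall framework is close to the paper's, and your key observation is a genuine streamlining: writing $I = \bigcap_{W\in C} W$ and noting that (absent two disjoint sets) every $D = \{a,b,x\}\in D$ must have $x\in I$, so $|D|\le |I|$, unifies the paper's separate counts of $|D|$ in its Cases~2--4. However, there is a real gap, and you have misdiagnosed where the hypothesis $|A|\ge|B|\ge|C|$ does its work.

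You flag the case $|I|\ge 2$ (say $I\supseteq\{c,i\}$, forcing every $C$-set to be $\{c,i,r\}$) as the delicate one and suggest the ordering hypothesis ``may need to be invoked to close the gap.'' In fact that case closes cleanly without the ordering: $r$ ranges over $[n]\setminus\{a,b,c,i\}$, giving $|C|\le n-4$ (your ``so $\le n-3$'' is a miscount; with $n-3$ you would get $|C|+|D|\le n-1$, which does \emph{not} suffice), and $|D|\le|I|=2$, so $|C|+|D|\le n-2$. The genuinely delicate case is $I=\{c\}$. There you get $|D|\le 1$, but you assert ``$|C|\le n-2$ is still available'' without justification --- and without the ordering hypothesis there is no a priori bound on $|C|$ anywhere near $n-2$: $C$ is just a family of $3$-sets containing $c$ and avoiding $a,b$, of which there are $\binom{n-3}{2}$. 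Moreover even if $|C|\le n-2$ held it would only yield $|C|+|D|\le n-1$. The paper's argument for this case is exactly where $|A|\ge|B|\ge|C|$ is used: combined with Lemma~\ref{ab} (which, in the non-disjoint, non-common-intersection branch, gives $|A|\le n-3$), one gets $|C|\le|B|\le|A|\le n-3$, hence $|C|+|D|\le (n-3)+1 = n-2$. You should rework the $I=\{c\}$ subcase to invoke Lemma~\ref{ab} in this way (and check that the alternative $\bigcap_{W\in c^{-1}(\lambda)}W\ne\emptyset$ of Lemma~\ref{ab} also forces $|A|\le n-3$ here, since then all $A$-sets contain both $a$ and $c$), and fix the arithmetic in the $|I|\ge 2$ subcase.
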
 
\begin{proof} 
A case  analysis: 
\begin{itemize} 
\item {\bf Case 1: $C=\emptyset$.}

Since clearly $|D|\leq n-2$, $|C|+|D|\leq n-2.$
\item{\bf Case 2: $|C|=1$.} 

Let $C=\{c,i,j\}$. Then either $c^{-1}(\lambda)$ contains  disjoint sets or every $W\in D$ must contain one of $c,i,j$. There are three such sets, hence $|C|+|D|\leq 3+1 = 4\leq n-2$. 

\item{\bf Case 3: $|C|\geq 2$ and $|\bigcap\limits_{W\in C} W|=2.$} 

Since $|C|\geq 2$ for any of the elements $j\not \in \bigcap\limits_{W\in C} W$ there exists a set $W\in C$ that does not contain it. 

Consider any set $Z=\{a,b,\lambda\}\in D$. If $\lambda = j$ then there exist two disjoint sets $W,Z \in c^{-1}(\lambda).$ The same conclusion is true if $\lambda \neq c,i$. In the opposite case we conclude that  $|D|\leq 2$. But $|C|\leq (n-4)$, since $a,b,c,i$ are forbidden options for any third member of a set in $C$. Thus $|C|+|D|\leq (n-2)$. 

\item{\bf Case 4: $|C|\geq 2$ and $\bigcap\limits_{W\in C} W=\{c\}.$} 

Let $W=\{a,b,i\}$ in $D$. If some $Z\in C$ does not contain $i$ then $W,Z\in c^{-1}(\lambda), W\cap Z=\emptyset$. 

In the opposite case every set $Z\in C$ must contain $i$. By the hypothesis it follows that $|D|\leq 1$. On the other hand $|C|\leq |B|\leq |A|$. By previous lemma $|C|\leq (n-3)$, hence $|C|+|D|\leq (n-2)$. 

\end{itemize} 
\end{proof}\qed

Note that, since all indices in the proofs above range on sets of polynomial cardinality ($[n], {{n}\choose {3}}$, etc.) we could simulate the arguments above  even with {\em Frege proofs} without significant issues, along the lines of the translation done in the case $k=2$. For instance, the cardinality of sets $A,B,C,D$ is encoded by applying formulas $Count_{n}$ to appropriately chosen sets of variables. For instance 
\[
|A|=Count[(X_{W,l})_{W\ni a,W\not \ni b}]
\]
Statements $|A|\geq |B|$ and $|B|\geq |C|$ can be encoded propositionally, and the above argument yields, for every 
$\{a,b,c\}$ a propositional proof of a statement of type $\Phi_{a,b,c}\vdash \Xi_{a,b,c}$, where 
$\Phi_{a,b,c}$ encodes the antecedent and Onto formulas, plus condition $|A|\geq |B|\geq |C|$, and $\Xi_{a,b,c}$ encodes the conclusion of Claim~\ref{claim:k3}. 

Alternate cases in the proofs of Lemmas~\ref{ab} and~\ref{cd} translate to disjunctions in the propositional formulations, the way (for $k=2$) the three alternatives in Lemma~\ref{claim1} translated to a disjunction in the propositional formula $Int_{n,l}$ in Lemma~\ref{deriv:2}. We omit further details. 

Now all we need to prove the desired result, by combining the previous two lemmas, is that if $c^{-1}(\lambda)\neq \emptyset$ then for some $\{a,b,c\}\in c^{-1}(\lambda)$ it holds  that $|A|\geq |B|\geq |C|.$

This only needs to be argued at the semantic level: the propositional translation of the conditional argument given the "good set" $\{a,b,c\}$ is then enough to give the proof of the desired result. 

To this end choose, as specified in \cite{garey-johnson-coloring}
\begin{itemize} 
\item $a$ so that it maximizes $|\{W\in c^{-1}(\lambda): a\in W\}|,$
\item $b$ among sets $\{a,i,j\}\in c^{-1}(\lambda)$ so that it maximizes $|\{W\in c^{-1}(\lambda):b\in W, a\not \in W\}|,$ 
\item $c$ among sets $\{a,b,l\}\in c^{-1}(\lambda)$ to maximize $|\{W\in c^{-1}(\lambda):c\in W, a,b\not\in W \}|$ 
\end{itemize} 
We have 
\[
|\{W\in c^{-1}(\lambda): a\in W\}|\geq |\{W\in c^{-1}(\lambda): b\in W\}|
\]
hence 
\begin{align*}
|A|& =|\{W\in c^{-1}(\lambda): a\in W\}|-|\{W\in c^{-1}(\lambda): a,b\in W\}|\geq \\ 
& \geq |\{W\in c^{-1}(\lambda): b\in W\}|-|\{W\in c^{-1}(\lambda): a,b\in W\}|= |B|.
\end{align*}
Similarly 
\begin{align*}
& |B|=  |\{W\in c^{-1}(\lambda): b\in W, a\not \in W\}| \\
& \geq  |\{W\in c^{-1}(\lambda): c\in W, a\not \in W\}|\geq \\
& \geq |\{W\in c^{-1}(\lambda): c\in W, a\not \in W, b\not \in W\}|= |C|.
\end{align*}
\end{proof}\qed
\end{document}